\newtheorem{theorem}{Theorem}
\newtheorem{remark}{Remark}
\newtheorem{lemma}{Lemma}
\newtheorem{example}{Example}
\newtheorem{corollary}{Corollary}
\newtheorem{definition}{Definition}
\newtheorem{proof}{Proof}
\newtheorem{problem}{Problem}
\newtheorem{assumption}{Assumption}
\DeclareFontFamily{OT1}{pzc}{}
\DeclareFontShape{OT1}{pzc}{m}{it}{<-> s * [1.000] pzcmi7t}{}
\DeclareMathAlphabet{\mathpzc}{OT1}{pzc}{m}{it}
\newcommand{\Rmnum}[1]{\expandafter\@slowromancap\romannumeral #1@}
\newcommand\addtag{\refstepcounter{equation}\tag{\theequation}}
\DeclarePairedDelimiter\floor{\lfloor}{\rfloor}
\begin{document}
\begin{frontmatter}

\title{Attack Allocation on Remote State Estimation in Multi-Systems: Structural Results and Asymptotic Solution}
%\thanksref{footnoteinfo}

%\thanks[footnoteinfo]{The work is supported by the Hong Kong Research Grant Council GRF grant ....}

\author[HKUST]{Xiaoqiang Ren}\ead{xren@connect.ust.hk},
\author[KTH]{Junfeng Wu}\ead{junfengw@kth.se},
\author[UU]{Subhrakanti Dey}\ead{subhrakanti.dey@signal.uu.se},
%%\author[HKUST]{Duo Han}\ead{dhanaa@ust.hk},
%%\author[BJIT]{Dawei Shi}\ead{dshi@ualberta.ca},
\author[HKUST]{Ling Shi}\ead{eesling@ust.hk}
%%\author[NTU]{Lihua Xie}\ead{elhxie@ntu.edu.sg},
%%\author[KTH]{Karl Henrik Johansson}\ead{kallej@kth.se}
%
%
\address[HKUST]{Department of Electronic and Computer Engineering, Hong Kong University of Science and Technology, Hong Kong.}

\address[KTH]{ACCESS Linnaeus Center, School of Electrical Engineering, Royal Institute of Technology, Stockholm, Sweden.}

\address[UU]{Signals and Systems Division, Department of Engineering Sciences, Uppsala University, Uppsala, Sweden}
%\address[BJIT]{State Key Laboratory of Intelligent Control and Decision of Complex Systems, School of Automation, Beijing Institute of
%Technology, P.R. China.}

\begin{keyword}
Attack; state estimation; Kalman filtering; structural results; Markov decision process; multi-armed bandit
\end{keyword}

\begin{abstract}
This paper considers optimal attack attention allocation on remote state estimation in multi-systems.
Suppose there are $\mathtt{M}$ independent systems, each of which has a remote sensor monitoring the system and sending its local estimates to a fusion center over a packet-dropping channel.
An attacker may generate noises to exacerbate the communication channels between sensors and the fusion center.
Due to capacity limitation, at each time  the attacker can exacerbate at most $\mathtt{N}$ of the $\mathtt{M}$ channels.
The goal of the attacker side is to seek an optimal policy maximizing the estimation error at the fusion center. The problem is formulated as a Markov decision process (MDP) problem, and the existence of an optimal deterministic and stationary policy is proved. We further show that the optimal policy has a threshold structure, by which the computational complexity is reduced significantly. Based on the threshold structure, a myopic policy is proposed for homogeneous models and its optimality is established.  To overcome the curse of dimensionality of MDP algorithms for general  heterogeneous models, we further provide an asymptotically (as $\mathtt{M}$ and $\mathtt{N}$ go to infinity) optimal solution, which is easy to compute and implement. Numerical examples are given to illustrate the main results.

\end{abstract}
\end{frontmatter}

\section{Introduction}
\emph{Motivations and backgrounds.} Cyber-physical systems, integrating information technology infrastructures with physical processes, are ubiquitous and usually critical in modern societies. Examples include sensor networks, power grids, water and gas supply systems, transportation systems, water pollution monitoring systems. The use of open communication networks, though enabling more efficient design and flexible implementation, makes cyber-physical systems more vulnerable to attacks~\cite{teixeira2015secure,pasqualetti2015control}. Illustrative examples are Iran's nuclear centrifuges accident~\cite{farwell2011stuxnet} and western Ukraine blackout~\cite{Ukraine16}.

Many research works on attackers' possible  behaviors for cyber-physical systems have been done recently.
%In~\cite{teixeira2015secure}, the authors classified the adversary models based on the attackers' model knowledge, disclosure resources and disruption resources.
Generally speaking, attacks can be classified as either denial of service (DoS) attacks or deception attacks~\cite{amin2009safe}.
DoS attacks, comprising availability of data, are most likely threats~\cite{byres2004myths} due to their easy implementation. DoS attacks in networked control systems are studied in~\cite{amin2009safe}. Optimal off-line DoS attack on remote state estimation over a finite horizon for a single sensor system is investigated in~\cite{zhang2015optimal}.  An interactive decision of sending data by sensor and jamming channel by an attacker for remote state estimation in a zero-sum game setting is studied in~\cite{li2015jamming}, and a similar setting is investigated for a control system in~\cite{gupta2010optimal}. Optimal DoS attacks were also studied in the context of detection~\cite{ren2014optimal}.
Deception attacks, comprising integrity of data, are more subtle. Various types of deception attacks have been studied, for example, replay attacks~\cite{mo2009secure}, stealthy deception attacks~\cite{guo2015linear} and  covert attacks~\cite{teixeira2012attack}.

\emph{Related works and contributions.} In this paper, we consider the DoS attacks. Each sensor monitors a (different) system and sends its estimates to a fusion center over a packet-dropping channel. An attacker is present and is capable of attack a certain number of channels at each time.
When a channel is under attack, the packet arrival rate decreases.
The problem is to study the optimal attack policy to maximize the averaged estimation error at the fusion center.
A threshold structure of optimal policies is proved.
The related works are~\cite{mo2012infinite,ren2014dynamic,leong2015optimality}, which study the structure of sensor scheduling policy. Our work differs from these works as follows. First, our work focuses on multi-systems, while a single sensor scenario is studied in aforementioned three papers. Second, we use a fundamentally different methodology. Specifically, both~\cite{mo2012infinite}~and~\cite{ren2014dynamic} proved the structure results by analyzing the stationary probability distribution of states, which, however, works only in very special and simple cases (e.g., a single sensor case). On the contrary, we resort to the MDP theory, a more general and powerful tool. Although an MDP approach was also adopted in~\cite{leong2015optimality}, the methods used to prove either the existence of optimal stationary and deterministic policy or the threshold structure are significantly different due to the different problem models (multi-systems versus single sensor system, different cost/reward structures\footnote{See the details in Footnote~\ref{footnote:cost}.}). Lastly, we provide an asymptotically optimal policy, which is rather easy to compute and implement.

In summary, the main contributions of this paper are as follows.
\begin{enumerate}
\item The problem of attack on remote state estimation in multi-systems is studied by an MDP formulation. The existence of a deterministic and stationary optimal policy is proved, which means that standard MDP algorithms (e.g., value iteration algorithm) can be utilized to compute the optimal policy. Moreover, a threshold structure of optimal policy is proved, by exploiting which a specialized algorithm may be developed to reduce the computational complexity. By the threshold structure, a myopic policy is proposed and its optimality is established for homogeneous models. The myopic policy is such that the expected reward at the next time is maximized.
\item To overcome the curse of dimensionality of MDP algorithms for general heterogeneous models, we provide an asymptotically optimal index-based policy using the multi-armed bandit theory. Since the indices are computed based on each system \emph{solely}, they are quite easy to compute. The index-based policy is implemented just by comparing these indices. What is more, our numerical examples show that this asymptotically optimal policy works quite well even when the number of total systems is small.
\end{enumerate}

The remainder of this paper is organized as follows. In Section~\ref{section:problem-setup}, the mathematical formulation of the considered problem is given. The main results, including the MDP formulation, existence of a stationary and deterministic optimal policy, threshold structure of the optimal policy and the asymptotically optimal index-based policy, are provided in Section~\ref{section:MainResults}. Numerical examples are given in Section~\ref{section:simulation} to illustrate the main results, after which we conclude the paper in Section~\ref{section:conclusion}. All the proofs are presented in Appendices.

\textit{Notation}: $\mathbb{R}$ ($\mathbb{R}_+$) is
the set of real (nonnegative) numbers and $\mathbb{N}$
the set of nonnegative integer numbers.
$\mathbb{S}_{+}^{n}$ ($\mathbb{S}_{++}^{n}$) is the set of $n$ by $n$ real positive semi-definite (definite) matrices.
For a matrix $X$, we use $\mathrm{Tr}(X), X^{\top}$ and $|X|$ to denote its trace, transpose and spectral radius, respectively.  We write $X\succeq 0$ ($X\succ 0$) if $X\in\mathbb{S}_{+}^{n}$ ($X\in\mathbb{S}_{++}^{n}$). For a vector $x$, denote its $i$-th element as $x_{[i]}$. We use $\circ$ to denote function composition, i.e., for two functions $f$ and $g$, $(f\circ g)(x) = f(g(x))$, and $g^i(x)\triangleq \mathop{\underbrace{g\circ g\circ\cdots \circ g}}\limits_{i \:\mathrm{times}}(x)$
with $g^0(x) \triangleq x$.
Let $\times$ denote Cartesian product. For a set $\mathbb{A}$, define the indicator function as $\mathbf{1}_{\mathbb{A}}(x) = 1$, if $x\in\mathbb{A}$; $0$ otherwise. Let $\bm{Pr}(\cdot) (\bm{Pr}(\cdot|\cdot))$ be the (conditional) probability.
For $x\in\mathbb{R}$, denote by $\floor{x}$ the largest integer less than or equal to $x$. Let $\bm{E}[\cdot]$ be the expectation of a random variable.

\section{Problem Formulation}\label{section:problem-setup}
\subsection{Remote Estimation with Packet-dropping Channels}

\setlength{\unitlength}{1.25mm}
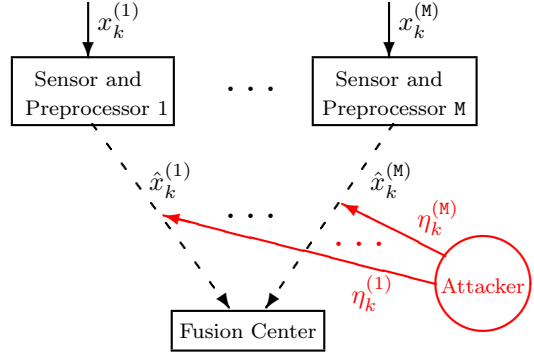
\begin{figure}[tp]
\thicklines
\centering
\begin{picture}
(60,40)(0,-45)
\thicklines

{\Large\put(22.5,-14.5){$\ldots$}}

\put(9,-8){$x_k^{(1)}$}

\put(41,-8){$x_k^{(\mathtt{M})}$}

{\scriptsize
\put(0,-18){\framebox(17,7)}
\put(1.5,-14){ Sensor and }
\put(0.5,-17){ Preprocessor $1$ }
\put(8,-5){\vector(0,-1){6}}

\put(32,-18){\framebox(17.2,7)}
\put(33.5,-14){ Sensor and }
\put(32.1,-17){ Preprocessor $\mathtt{M}$ }
\put(40,-5){\vector(0,-1){6}}

{\color{red}\put(45.5,-36){Attacker}}

\put(17,-42){\framebox(16,4){Fusion Center}}
}

\dashline{0.91}(8.5,-18)(23,-37.5)
\put(23,-37.5){\vector(2,-3){0.1}}

\dashline{0.91}(40.5,-18)(27,-37.5)
\put(27,-37.5){\vector(-2,-3){0.1}}

{\Large\put(22.5,-28){$\ldots$}}

\put(14.5,-25){$\hat{x}_k^{(1)}$}
\put(38,-25){$\hat{x}_k^{(\mathtt{M})}$}

{\color{red}
\put(36,-37){$\eta_k^{(1)}$}
\put(43,-29){$\eta_k^{(\mathtt{M})}$}
\put(50, -35){\circle{10}}
\put(45,-35){\vector(-4,1){29}}
\put(46,-32){\vector(-2,1){11}}
{\Large\put(34,-31){$\ldots$}}
}

\end{picture}
 \caption{ Remote state estimation with  an attacker. } \label{fig:block-diagram}
\end{figure}

There are totally $ \mathtt{M}$ independent discrete-time (i.e., sampled) linear time-invariant systems and $\mathtt{M}$ sensors. The $i$-th sensor monitors the $i$-th system   (Fig.~\ref{fig:block-diagram}):
\begin{subequations}\label{eqn:system-model}
\begin{align}
x^{(i)}_{k+1} & =  A_ix^{(i)}_k + \omega^{(i)}_k,  \\
y^{(i)}_{k} & =  C_ix^{(i)}_{k} + \upsilon^{(i)}_{k},
\end{align}
\end{subequations}
where $x^{(i)}_k \in \mathbb{R}^{n_i}$ is the system state vector and $y^{(i)}_k \in \mathbb{R}^{m_i}$ is the observation vector.
The noises $\omega^{(i)}_{k}$ and
$\upsilon^{(i)}_k $ are i.i.d. white Gaussian random variables
with zero mean and covariance $Q_i\succeq 0, R_i\succ0$, respectively. The initial state $x^{(i)}_0$
is a zero-mean Gaussian random variable
that is uncorrelated with $\omega^{(i)}_k$ and
$\upsilon^{(i)}_k$.  It is assumed that the systems at different sensors are independent of each other. To avoid trivial problems, we assume the systems are unstable, i.e., $|A_i| > 1, \forall i=1,\ldots,\mathtt{M}.$ The pair $(C_i,A_i)$ is assumed to be detectable and $(A_i,Q_i^{1/2})$ stabilizable.

Each sensor is assumed to be intelligent in the sense that a Kalman filter is run locally. With the above detectability and stabilizability assumptions, the estimation error covariance associated with each local Kalman filter converges exponentially to a steady state~\cite{anderson2012optimal}. On the other hand, since
the nature of asymptotic behaviors of remote estimation under malicious attacks (which will be elaborated later)  over an infinite horizon cost  is investigated, without any performance loss, we assume  the Kalman filter at each sensor enters into the steady state at $k=0$. Let the steady state estimation error covariance at sensor $i$ be $\hat{P}^{(i)}$.

At each time  $k$, sensor $i$ sends the output of its local Kalman filter (i.e., the \textit{a posterior} minimum mean square error (MMSE) estimate) $\hat{x}^{(i)}_k$~\cite{anderson2012optimal} %\footnote{To avoid attention distraction, we give the explicit expressions of $\hat{x}^{(i)}_k$ at Appendix A.}
 to a fusion center over a packet-dropping communication channel. Let $\gamma_k^{(i)}\in\{0,1\}$ denote whether or not the packet is received error-free by the fusion center. If it arrives successfully, $\gamma_k^{(i)}=1$; $\gamma_k^{(i)}=0$ otherwise. Again since the asymptotic behavior over an infinite horizon is studied, it is assumed without any performance loss that $\gamma_0^{(i)}=1,\forall i=1,\ldots,\mathtt{M}$. Since the sensor sends the local MMSE estimates instead of raw measurements, the MMSE estimate and the associated error covariance at the fusion center (whether or not the attacker introduced later is present) for $k\geq 1$ is:
\begin{align*}
\tilde{x}^{(i)}_k=&\left\{
        \begin{array}{ll}
            \hat{x}^{(i)}_k, & \text{if $\gamma_k^{(i)} =1$}, \\
            A_i\tilde{x}^{(i)}_{k-1}, & \text{if $\gamma_k^{(i)} =0$},
        \end{array}
    \right. \\
\tilde{P}^{(i)}_k=&\left\{
        \begin{array}{ll}
            \hat{P}^{(i)}, & \text{if $\gamma_k^{(i)} =1$}, \\
            h_i(\tilde{P}_{k-1}^{(i)}), & \text{if $\gamma_k^{(i)} =0$},
        \end{array}
    \right.
\end{align*}
where functions $h_i, 1\leq i \leq \mathtt{M}$, are defined as follows:
\[h_i(X) = A_iXA_i^{\top} + Q_i, \quad \text{for}\: X\in\mathbb{S}_{+}^{n_i}.\]
Notice that by the assumption $\gamma_0^{(i)}=1,\forall i$, the starting point at the fusion center is: $\tilde{x}^{(i)}_0 = \hat{x}^{(i)}_0$ and $\tilde{P}^{(i)}_0 = \hat{P}^{(i)}$.

\subsection{Attack Model}
There is an attacker capable of  generating noises to exacerbate the communication channels between sensors and the fusion center.
Due to capacity limitation, at each time the attacker can only choose at most $\mathtt{N}$ of the $\mathtt{M}$ channels to attack.
Let $\eta_k^{(i)}\in\{0,1\}$ indicate whether or not the $i$-th channel is under attack: $\eta_k^{(i)}=1$ if it is; $\eta_k^{(i)}=0$ otherwise.
%Assume that whether $\eta_k^{(i)}=1$ or $0$, $\{\gamma_k^{(i)}\}$ is independent and Bernoulli distributed. The noises generated by the attacker reduce the packet arrival rate: if $\eta_k^{(i)}=0$, $\bm{E}[\gamma_k^{(i)}]=\epsilon_i$; and $\bm{E}[\gamma_k^{(i)}]=\underline{\epsilon_i}$ if $\eta_k^{(i)}=1$ with $0<\underline{\epsilon_i}<\epsilon_i\leq 1$.
We make the following assumption about the effects of the attacks on packet dropouts.
\begin{assumption}
The packet loss process is memoryless with respect to the considered attacks, i.e., the following equality holds for any $k\geq 1$:
\begin{align*}
\bm{Pr}(\gamma_1^{(i)},\ldots,\gamma_k^{(i)}|\eta^{(i)}_{1:k}    )  = \prod_{j=1}^{k} \bm{Pr}(\gamma_j^{(i)}|\eta_j^{(i)}),
\end{align*}
where $\eta^{(i)}_{1:k} \triangleq (\eta_{1}^{(i)},\ldots,\eta_{k}^{(i)})$. Let $\bm{Pr}(\gamma_k^{(i)} = 1|\eta_k^{(i)}=0) = \epsilon_i$ and $\bm{Pr}(\gamma_k^{(i)} = 1|\eta_k^{(i)}=1) = \underline{\epsilon_i}$. We assume that $0<\underline{\epsilon_i}<\epsilon_i\leq 1$.
\end{assumption}

It is assumed that the attacker has the knowledge of system dynamics (i.e., $A_i, C_i, Q_i$ and  $R_i$\footnote{The steady state estimation error covariance $\hat{P}^{(i)}$ thus can be obtained by solving a discrete-time algebraic Riccati equation.}),
has access to the knowledge of $\{\gamma_k^{(i)}\}_{k\in\mathbb{N}},\forall i=1,\ldots,\mathtt{M},$ and is able to learn the channels' packet arrival rate with or without attacks (i.e., $\underline{\epsilon_i}$ and $\epsilon_i$) from realization of $\{\gamma_k^{(i)}\}_{k\in\mathbb{N}}$.
%the communication performance with or without attack (i.e., $\underline{\epsilon_i}$ and $\epsilon_i$) and the realization of $\{\gamma_k^{(i)}\}$.
At each time, the attacker determines the subset of the communication channels to be attacked based on all the information it collects. Let $\gamma_k=(\gamma_k^{(1)}, \ldots,\gamma_k^{(\mathtt{M})})$ and $\gamma_{1:k}=(\gamma_1, \ldots,\gamma_k)$; $\eta_k$ and $\eta_{1:k}$ are defined in the same way. Define a feasible attack attention allocation decision rule at time $k$ as a stochastic kernel $\pi_k$ from $\gamma_{1:k-1}$ and $\eta_{1:k-1}$ to $\Omega$\footnote{
We say $\pi_k$ is a stochastic kernel from $\gamma_{1:k-1}$ and $\eta_{1:k-1}$ to $\Omega$ if the map $\pi_k:\wp(\Omega) \times \{0,1\}^{\mathtt{M}(k-1)}\times \Omega^{k-1} \mapsto [0,1]$ with $\wp(\Omega)$ being the power set of $\Omega$ has the following properties:
\begin{enumerate}
  \item For any realization of $\gamma_{1:k-1}\in\{0,1\}^{\mathtt{M}(k-1)}$ and $\eta_{1:k-1}\in\Omega^{k-1}$, $\pi_k(\cdot|\gamma_{1:k-1},\eta_{1:k-1})$ is a probability measure on $\wp(\Omega)$.
  \item For any set $\mathbb{B}\in\wp(\Omega)$, $\pi_k(\mathbb{B}|\cdot)$ is a measurable function on $\{0,1\}^{\mathtt{M}(k-1)}\times \Omega^{k-1}$.
\end{enumerate}
This kernel-form definition includes the possibility that the attack policy is randomized. Nevertheless, in Section~\ref{section:MainResults} we prove that there exists a deterministic optimal attack policy.},
%$\pi_k$ that satisfies:
%\begin{align} \label{eqn:AttackPolicyDef}
%\pi_k(\Omega|\gamma_{1:k-1})=1,
%\end{align}
where $\Omega$ is the set of all feasible $\eta_k$:
\[\Omega \triangleq \left\{\eta\in\{0,1\}^\mathtt{M}: \sum_{i=1}^\mathtt{M} \eta_{[i]} \leq \mathtt{N}   \right\}.  \]
%Define the total number of nonzero entries of $\eta_k$ as $\|\eta_k\|_0 \triangleq  \sum_{i=1}^\mathtt{M} \eta_k^{(i)}.$ Then one note that $f_k$ is feasible only if $\|\eta_k\|_0 \leq \mathtt{N}$.

Let $\pi=(\pi_1,\ldots, \pi_k, \ldots)$ be the infinite-horizon attack policy. A policy $\pi$ is feasible only if $\pi_k, k\geq 1$ are feasible. Let $\Pi$ be the set of all feasible policies.  The reward (from the perspective of the attacker) associated with an attack policy $\pi$ is the averaged infinite-horizon estimation error at the centers defined as
\begin{align} \label{eqn:Reward}
\bm{R} (\pi) = \mathop {\lim\inf}_{\mathtt T\to \infty} \frac{1}{\mathtt T} \bm{E}\left[\sum_{k=1}^{\mathtt T} \sum_{i=1}^\mathtt{M} \mathrm{Tr}(\tilde{P}_k^{(i)})\right].
\end{align}
The goal of the attacker is to seek a feasible policy maximizing the above reward:
\begin{problem} \label{problem:1}
\begin{align}
\sup_{\pi\in\Pi}\: \bm{R} (\pi).
\end{align}
\end{problem}
To avoid trivial problems, we assume $\underline{\epsilon_i} > 1- \frac{1}{|A_i|^2}, \forall i$. Otherwise, the attacker may consistently attack the communication channel of the $i$-th system to gain an infinite reward since $\tilde{P}_k^{(i)} \to \infty$ as $k\to\infty$ in the presence of consistent attacks.

\section{Main Results} \label{section:MainResults}
In this section, we solve Problem~\ref{problem:1} by formulating it as a MDP problem. We show that, without any performance loss, the attack decision rule  can be restricted to a smaller class: the optimal policy is deterministic (i.e., the stochastic kernel $\pi_k$ is reduced to a measurable function), stationary (independent of time index $k$) and Markovian (the argument is not the whole history $\gamma_{1:k-1}$). We further prove that the optimal policy has a threshold structure. For the asymptotic regime (i.e., $\mathtt{M}\to\infty$ and $\mathtt{N}\to\infty$), an explicit form of the optimal policy is provided, which is quite easy to compute and implement.

\subsection{MDP Formulation}
Before proceeding, we define a random variable $\tau_k^{(i)}$ as
\[\tau_k^{(i)} = k - \mathrm{max}\{k^*: \gamma_{k^*}^{(i)} =1, 0\leq k^* \leq k\},\]
which indicates the time duration from the last successful transmission time to time $k$. Let $\tau_k=(\tau_k^{(1)},\ldots,\tau_k^{(\mathtt{M})})$.

For ease of exposition, except for the myopic policy and asymptotic analysis, in the remainder of this section we assume that $\mathtt{M}=2$ and $\mathtt{N}=1$. We remark that the following MDP formulation and the existence of a deterministic and stationary optimal policy (Theorem~\ref{theorem:existence}) can be extended trivially to the cases with general $\mathtt{M}$ and $\mathtt{N}$. While for the threshold structure, see Remark~\ref{remark:threshold}.

Now we describe the formulated infinite-horizon discrete-time MDP by a quadruplet ($\mathbb{S}$, $\mathbb{A}$, $\bm{P}(\cdot|\cdot,\cdot)$, $r(\cdot,\cdot)$). Each item in the tuple is elaborated as follows.
\begin{enumerate}
  \item The state at time step $k\geq 1$ is defined as $s_{k} \triangleq (\tau_{k-1}^{(1)},\tau_{k-1}^{(2)})$. Therefore, the state space $\mathbb{S}=\mathbb{N}^2$.
  \item The action space $\mathbb{A} \triangleq \{\mathbf{0},e_1,e_2\}$, where $\mathbf{0}=(0,0)$ means that none of the systems is attacked, $e_1=(1,0)$ and $e_2=(0,1)$ means that \emph{only} the first and \emph{only} the second is attacked, respectively.
  \item The transition probability is stationary. Let $s=(j_1,j_2), s'=(j_1',j_2')$ with $j_i,j_i'\in\mathbb{N}, i=1,2$ and $a\in\mathbb{A}$, then $\forall k\geq 1$,
  \begin{align*}
  \bm{P}(s'|s,a) &\triangleq\bm{Pr}(s_{k+1}=s'|s_k=s,a_k=a)\\
  &\triangleq p_1(j_1'|j_1,a_{[1]})p_2(j_2'|j_2,a_{[2]}),
  \end{align*}
  where for $i=1,2$,
  \begin{align*}
  p_i(j_i'|j_i,a_{[i]})=&\left\{
        \begin{array}{ll}
           \epsilon_i, & \text{if $j_i' =0, a_{[i]} = 0$}, \\
           \underline{\epsilon_i}, & \text{if $j_i' =0, a_{[i]} = 1$}, \\
           1-\epsilon_i, & \text{if $j_i' = j_i + 1, a_{[i]} = 0$}, \\
           1-\underline{\epsilon_i}, & \text{if $j_i' = j_i + 1, a_{[i]} = 1$}, \\
           0, & \text{otherwise}.
        \end{array}
    \right.
\end{align*}
  \item The one-stage reward is independent of the action and defined as
  \begin{align} \label{eqn:OneStageReward}
  r(s=(j_1,j_2),a) = \mathrm{Tr}(h_1^{j_1}(\hat{P}^{(1)})) + \mathrm{Tr}(h_2^{j_2}(\hat{P}^{(2)})).
  \end{align}
\end{enumerate}
Let $\mathbb{H}_k \triangleq (s_1,a_1,\ldots,s_k)$ be the history of states and actions up to time $k$, and $\theta=(\theta_1,\ldots,\theta_k,\ldots)$ be an admissible policy with $\theta_k$ as a
stochastic kernel from $\mathbb{H}_k$ to $\mathbb{A}$. Let $\Theta$ be the class of all such admissible policies.
Define the reward associated with initial state $s_1=s$ and policy $\theta$ by
\[ \bm{J}(s,\theta) =  \mathop{ \lim\inf}_{\mathtt T\to\infty} \frac{1}{\mathtt T}\bm{E}_s^\theta\left[\sum_{k=1}^{\mathtt T} r(s_k,a_k)    \right]. \]
Let $s_{1:k}\triangleq (s_1,\ldots,s_k)$. It is evident that $s_{1:k-1}$ is equivalent to $\gamma_{1:k-1}$, and thus $\theta$ is also equivalent to $\pi$ (specialized to the case $\mathtt{M}=2, \mathtt{N}=1$).
%Note that since by the assumption $\gamma_0^{(i)}=1,\forall i$, $s_1=(0,0)$ is a constant, then $\sigma(\mathbb{H}_k)$ and $\theta$ are equivalent to $\sigma(\gamma_{1:k-1})\times\sigma(\eta_{1:k-1})$ and $\pi$ (in the case $\mathtt{M}=2, \mathtt{N}=1$).
One thus verifies that Problem~\ref{problem:1} (specialized to the case $\mathtt{M}=2, \mathtt{N}=1$) can be equivalently transformed to the following problem.
\begin{problem}
Find the optimal policy $\theta^*\in\Theta$ such that
\[\bm{J}((0,0),\theta^*) = \sup_{\theta\in\Theta} \bm{J}((0,0),\theta).   \]
\end{problem}

\subsection{Structural Results}
We first show that the optimal policy is stationary and deterministic, and satisfies an equality.
We say that $\theta=(\theta_1,\ldots,\theta_k,\ldots)$ is stationary and deterministic,  if there exists a measurable function $f:\mathbb{S}\mapsto \mathbb{A}$ satisfying $\forall k\geq 1$, $\theta_k(f(s)|\mathbb{H}_k')=1$ for any $\mathbb{H}_k' \triangleq (s_1,a_1,\ldots,s_k=s)$. Therefore, in the following, with abuse of notations, we use $f$ to represent a stationary and deterministic policy and
let $\mathbb F$ be the set of all admissible stationary and deterministic policies.
For a measurable function $q:\mathbb S\mapsto \mathbb R$, denote
\begin{equation}\label{eqn:G_function}
\bm{G}(q,s,a)\triangleq \sum_{s'\in\mathbb{S}} q(s')\bm P(s'|s,a).
\end{equation}
We then have the following theorem.
\begin{theorem} \label{theorem:existence}
There exists an optimal stationary and deterministic policy $f^*\in\mathbb{F}$ such that
\begin{align*}
\bm{J} (s,f^*)   \geq \bm{J} (s,\theta), \quad \forall s\in\mathbb{S}, \theta\in\Theta.
\end{align*}
Moreover,
\begin{align*}
f^*(s) =& \mathop{\arg\max}_{a\in\mathbb{A}}\{r(s,a))-\varrho^*+\bm{G}(q,s,a)\}, \addtag \label{eqn:optimalActionExistence} \\
\bm{J} (s,f^*) =& \varrho^*,
\end{align*}
where $q:\mathbb{S}\mapsto \mathbb{R}$ and $\varrho^*\in\mathbb{R}$ satisfy
\begin{align} \label{eqn:differentialEquation}
q(s) = \max_{a\in\mathbb{A}}\{r(s,f(s))-\varrho^*+\bm{G}(q,s,a)\}.
\end{align}
\end{theorem}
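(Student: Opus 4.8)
The plan is to verify the hypotheses of a standard average-reward MDP existence theorem — specifically, the vanishing-discount approach (see, e.g., the results of Sennott or the treatment in Hern\'andez-Lerma and Lasserre) — which yields simultaneously the existence of an optimal stationary deterministic policy, the constant optimal gain $\varrho^*$, and the average-cost optimality equation \eqref{eqn:differentialEquation}. The state space $\mathbb S = \mathbb N^2$ is countable and the action space $\mathbb A = \{\mathbf 0, e_1, e_2\}$ is finite, so the main work is to establish the regularity conditions that make the vanishing-discount limit go through despite the one-stage reward $r(s,a) = \mathrm{Tr}(h_1^{j_1}(\hat P^{(1)})) + \mathrm{Tr}(h_2^{j_2}(\hat P^{(2)}))$ being \emph{unbounded} (it grows like $|A_1|^{2j_1} + |A_2|^{2j_2}$ in the state coordinates).

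First I would introduce, for each discount factor $\beta \in (0,1)$, the discounted-reward problem with value function $V_\beta(s) = \sup_\theta \bm E_s^\theta[\sum_{k\ge1}\beta^{k-1} r(s_k,a_k)]$. Since $\mathbb A$ is finite and, for each fixed $s$, the reward is bounded and the transition kernel is a proper probability measure, the discounted problem has a stationary deterministic optimal policy $f_\beta$ solving the discounted optimality equation $V_\beta(s) = \max_a\{r(s,a) + \beta\,\bm G(V_\beta,s,a)\}$; this is classical. The crux is then to show that $V_\beta$ is finite for each $\beta<1$ and, more importantly, that the normalized relative value $h_\beta(s) \triangleq V_\beta(s) - V_\beta(s_0)$ (with reference state $s_0 = (0,0)$) stays \emph{uniformly bounded in $\beta$ on every finite set}, and is bounded above by a fixed function and below by a constant. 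To control these I would use the structure of the dynamics: from any state $(j_1,j_2)$ the chain under \emph{any} action returns to a coordinate value $0$ with probability at least $\underline{\epsilon}_i$ at each step, so the first passage time to $s_0$ has a geometric tail uniformly in the policy; combined with the standing assumption $\underline{\epsilon_i} > 1 - 1/|A_i|^2$ — which is exactly what makes $\bm E[\mathrm{Tr}(\tilde P_k^{(i)})]$ bounded under persistent attack — one gets a uniform bound on $\bm E_s^\theta[\sum_{k=0}^{\sigma-1} r(s_k,a_k)]$ where $\sigma$ is the hitting time of $s_0$. This is precisely Sennott's condition (or the ``(SEN)'' / ``(W)'' conditions), and it both guarantees $V_\beta(s) < \infty$ and delivers the needed equiboundedness of $\{h_\beta\}$.

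Given the uniform bounds on $\{h_\beta\}$, I would take a sequence $\beta_n \uparrow 1$ along which $(1-\beta_n)V_{\beta_n}(s_0) \to \varrho^*$ (bounded since rewards from $s_0$ are bounded and recurrence is fast) and, by a diagonal argument over the countable state space, $h_{\beta_n}(s) \to q(s)$ pointwise. Passing to the limit in the discounted optimality equation $(1-\beta_n)V_{\beta_n}(s_0) + h_{\beta_n}(s) = \max_a\{r(s,a) + \beta_n \bm G(h_{\beta_n},s,a)\}$ — where the limit inside $\bm G$ is justified by dominated convergence using the uniform upper/lower envelopes for $h_\beta$ and the fact that for fixed $(s,a)$ the kernel $\bm P(\cdot|s,a)$ is supported on just two points — yields exactly \eqref{eqn:differentialEquation}, namely $q(s) = \max_a\{r(s,a) - \varrho^* + \bm G(q,s,a)\}$. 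Defining $f^*(s)$ as the maximizer in \eqref{eqn:optimalActionExistence} (a measurable selection, trivial here since $\mathbb A$ is finite), a standard argument — iterate the optimality equation along the trajectory generated by $f^*$, divide by $\mathtt T$, and use that $q$ is $o(\mathtt T)$-integrable along orbits (again from the geometric return times) — shows $\bm J(s,f^*) = \varrho^*$, while the reverse inequality $\bm J(s,\theta)\le\varrho^*$ for arbitrary $\theta\in\Theta$ follows by the same iteration applied with ``$\le$'' in place of ``$=$'' and Fatou's lemma for the $\liminf$.

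The main obstacle is the unboundedness of $r$: one must carefully verify that the relative value functions $h_\beta$ do not blow up as $\beta\to1$ and that the interchange of limit and expectation in $\bm G(h_{\beta_n},s,a)$ is legitimate. Both hinge on the same quantitative input — that the assumption $\underline{\epsilon_i}>1-1/|A_i|^2$ forces $\bm E[\mathrm{Tr}(\tilde P^{(i)}_k)]$ to remain bounded under \emph{every} admissible policy, so that the per-cycle accumulated reward between visits to $s_0$ has a finite, policy-uniform bound. Once that estimate is in hand, the rest is the routine machinery of the vanishing-discount method.
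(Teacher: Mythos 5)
Your route is sound but genuinely different from the paper's. You go through the vanishing-discount method (Sennott / Hern\'andez-Lerma--Lasserre): bound the relative discounted values $h_\beta$ via policy-uniform geometric return times to $(0,0)$ and the stability condition $\underline{\epsilon_i}>1-1/|A_i|^2$, extract $\varrho^*$ and $q$ along $\beta_n\uparrow 1$, and verify optimality by iterating the limiting equation. The paper instead constructs an explicit weight function $\bm W(s)$ growing like $\lambda_i^{j_i}$ for small $j_i$ and $|A_i|^{2j_i}$ for large $j_i$, proves a Lyapunov drift inequality $\bm G(\bm W,s,f(s))\le\beta\bm W(s)+b\mathbf 1_{\{(0,0)\}}(s)$ plus a minorization at $(0,0)$ to get \emph{uniform} $\bm W$-geometric ergodicity over all stationary deterministic policies, shows $\|\bar r\|_{\bm W}<\infty$, and then invokes the average-reward results of Guo et al.\ for $\bm W$-bounded unbounded rewards. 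Both hinge on the same two quantitative facts (joint return to $(0,0)$ at rate at least $\underline{\epsilon_1}\,\underline{\epsilon_2}$, and $(1-\underline{\epsilon_i})|A_i|^2<1$ taming the reward growth); your approach is arguably more self-contained and elementary, while the paper's yields uniform ergodicity constants independent of the policy, which it reuses implicitly when constructing $q$ as a vanishing-discount limit in the proof of its structural lemma.

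One step deserves more care than you give it: you infer a finite, policy-uniform bound on the per-cycle reward $\bm E_s^\theta\bigl[\sum_{k<\sigma}r(s_k,a_k)\bigr]$ from the boundedness of $\bm E[\mathrm{Tr}(\tilde P_k^{(i)})]$ at each fixed $k$. This does not follow directly, because the event $\{\sigma>k\}$ (no return to $(0,0)$ yet) is positively correlated with large $\tau_k^{(i)}$ and hence with large rewards; you need a joint estimate, e.g.\ H\"older with $(1-\underline{\epsilon_i})|A_i|^{2p}<1$ for some $p>1$ against the geometric tail of $\sigma$, or a Lyapunov function mixing the two rates exactly as the paper's $\bm W$ does. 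Also, for fixed $(s,a)$ the kernel is supported on four points (each coordinate independently resets or increments), not two; this is harmless since the support is finite, but the interchange of limit and sum should be stated for the four-point sum.
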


Theorem~\ref{theorem:existence} says that deterministic and stationary optimal policy exists and can be computed as~\eqref{eqn:optimalActionExistence} with a differential value function (i.e., $q(s)$) satisfying the Bellman equation~\eqref{eqn:differentialEquation}. This provides a theoretic basis for further analysis (structural properties of optimal policies) and computation methods. In particular, with some additional technical requirements \footnote{One may verify that all requirements in~\cite[Assumption 3.8]{zhu2005value} are satisfied in our case. Due to the limited space, we omit the verification here.}, the value iteration algorithm converges. Furthermore, following the ideas in~\cite[Chapter 8]{sennott2009stochastic}, one can use a value iteration algorithm for finite states to approximate the countable state space in our case, and compute the optimal policy $f^*$, the differential value function $q$ and the optimal averaged reward $\varrho^*$.

We now present a nice structure of the optimal policy $f^*$, which helps reduce the computational complexity of the MDP algorithm significantly.
% \begin{theorem} \label{theorem:OptimalAction}
%The optimal policy $f^*$ has a threshold structure:
%there exist two (in general different) functions $l_1,l_2: \mathbb{N}\mapsto\mathbb{N}$ such that
%  \begin{align*}
%  f^*(s=\{j_1,j_2\})
%  =&\left\{
%        \begin{array}{ll}
%           e_2, & \text{if $j_2 \geq l_2(j_1)$}, \\
%           e_1, & \text{if $j_2 < l_2(j_1)$},
%        \end{array}
%    \right.   \addtag \label{eqn:ActionStructure1} \\
%  f^*(s=\{j_1,j_2\})
%  =&\left\{
%        \begin{array}{ll}
%           e_1, & \text{if $j_1 \geq l_1(j_2)$}, \\
%           e_2, & \text{if $j_1 < l_1(j_2)$}.
%        \end{array}
%    \right.  \addtag  \label{eqn:ActionStructure2}
%\end{align*}
%\end{theorem}
 \begin{theorem} \label{theorem:OptimalAction}
There exists a critical curve $l_c(j_1,j_2)=0$, of which the function $l_c(j_1,j_2)$ is non-decreasing (and non-increasing) with respect to $j_1$ ($j_2$), dividing $\mathbb{N}^2$ into disjoint regions such that
\begin{enumerate}
  \item $f^*(s=(j_1,j_2)) = e_1$,  if $l_c(j_1,j_2)> 0$;
  \item  $f^*(s=(j_1,j_2)) = e_2$, if $l_c(j_1,j_2) \leq 0$.
\end{enumerate}
\end{theorem}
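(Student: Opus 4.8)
The plan is to establish the threshold structure by a monotonicity analysis of the differential value function $q$ from Theorem~\ref{theorem:existence}, together with a careful comparison of the $\mathbf{Q}$-factors $\mathscr{Q}(s,a)\triangleq r(s,a)-\varrho^*+\bm{G}(q,s,a)$ for the three candidate actions. First I would observe that since the one-stage reward $r$ does not depend on the action, $f^*(s)=\arg\max_{a}\bm{G}(q,s,a)$, so the decision reduces to comparing $\bm{G}(q,s,\mathbf{0})$, $\bm{G}(q,s,e_1)$ and $\bm{G}(q,s,e_2)$. Because the transition kernel factors across the two coordinates, each $\bm{G}(q,s,a)$ is a convex combination, in one coordinate, of $q$ evaluated at $(0,\cdot)$ versus $(j_1+1,\cdot)$ (and symmetrically in the other), with the attack action $e_i$ merely shifting the weight from $\epsilon_i$ to the smaller $\underline{\epsilon_i}$. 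Hence $\bm{G}(q,s,e_1)-\bm{G}(q,s,\mathbf{0})$ has the sign of $\Delta_1 q(s)\triangleq \bm{E}[q(j_1+1,J_2)]-\bm{E}[q(0,J_2)]$ (with $J_2$ distributed according to the no-attack transition in coordinate~2), and similarly for $e_2$ with $\Delta_2 q$. The proof then hinges on showing (i) this difference is nonnegative, so that $\mathbf{0}$ is never strictly optimal and the choice is always between $e_1$ and $e_2$ — intuitively, attacking somewhere is never worse since it can only increase future error — and (ii) the quantity $l_c(j_1,j_2)\triangleq \Delta_1 q(j_1,j_2)-\Delta_2 q(j_1,j_2)$ is non-decreasing in $j_1$ and non-increasing in $j_2$, which yields exactly the claimed critical curve with $e_1$ chosen when $l_c>0$ and $e_2$ otherwise.

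To get the required monotonicity and supermodularity of $q$, I would run the value-iteration recursion $q_{n+1}(s)=\max_a\{r(s,a)-\varrho^*+\bm{G}(q_n,s,a)\}$ (which converges to $q$ under the technical conditions cited after Theorem~\ref{theorem:existence}) and prove by induction that each iterate $q_n$ lies in the class of functions that are (a) non-decreasing in each coordinate and (b) submodular, i.e., have non-increasing differences $q_n(j_1+1,j_2)-q_n(j_1,j_2)$ in $j_2$ (equivalently, $\Delta_1$-type increments are monotone in the right directions). The base case is immediate from the explicit form of $r$ together with the facts that $h_i$ is monotone on $\mathbb{S}_+^{n_i}$ and that $\mathrm{Tr}(h_i^{j}(\hat P^{(i)}))$ is increasing and the cross term vanishes (so $r$ is trivially submodular). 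For the inductive step one must check that the operations preserving the class — a nonnegative-combination "smoothing" in each coordinate coming from $\bm{G}$, addition of the separable reward, and pointwise maximum over the three structured actions — do not destroy (a) or (b). The monotonicity in a coordinate is routine; the delicate point is that taking the maximum over $a\in\{\mathbf{0},e_1,e_2\}$ preserves submodularity, which I would handle by writing each $\bm{G}(q_n,\cdot,a)$ explicitly as an affine function of the one-coordinate increments and verifying the max of these particular affine forms stays submodular.

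The main obstacle I anticipate is precisely step (ii): propagating submodularity (equivalently, the sign-monotonicity of $l_c$) through the $\max$ operator, since in general the maximum of submodular functions need not be submodular. The saving structure here is that the three actions differ only through the scalar weights $(\epsilon_i,\underline{\epsilon_i})$ in a product kernel, so the comparison $\bm{G}(q_n,s,e_1)$ vs.\ $\bm{G}(q_n,s,e_2)$ reduces to comparing two explicit convex combinations built from the same increments of $q_n$; I expect to show that the index $l_c$ is itself a monotone transform of these increments and therefore inherits the desired monotonicity directly, bypassing a generic "max preserves submodularity" claim. A secondary technical nuisance is the unbounded (countable) state space and the use of a $\liminf$ average-reward criterion: to make the induction and the limit $q_n\to q$ legitimate I would invoke the approximation-by-finite-state scheme referenced after Theorem~\ref{theorem:existence} (following \cite{sennott2009stochastic}), carry out the structural induction on the finite truncations, and pass to the limit, noting that non-decreasingness and submodularity are closed under pointwise limits. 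Finally, I would note that the degenerate cases where $l_c=0$ on a whole region are absorbed into case~2 of the statement by the stated tie-breaking convention.
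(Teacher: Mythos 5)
Your proposal follows essentially the same route as the paper: establish monotonicity and submodularity of the differential value function by a structure-preserving value-iteration induction (the paper carries this out via the vanishing-discount operator $\bm{T}_{\alpha}$, a contraction on a weighted-norm space, rather than direct average-reward iteration), then compare $\bm{G}(q,s,e_1)$ with $\bm{G}(q,s,e_2)$ to eliminate $\mathbf{0}$ and obtain the monotone critical curve. The one step you leave as a sketch --- preservation of submodularity under the max over actions --- is precisely where the paper's proof does its real work (a case analysis on the optimal actions at $s\downarrow s'$ and $s\uparrow s'$, using monotonicity together with a without-loss-of-generality ordering of $(1-\underline{\epsilon_1})(1-\epsilon_2)$ versus $(1-\epsilon_1)(1-\underline{\epsilon_2})$), but the mechanism you identify for it is the correct one.
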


%\begin{remark}
%One may verify that to satisfy~\eqref{eqn:ActionStructure1}~and~\eqref{eqn:ActionStructure2} simultaneously, both functions $l_1,l_2$ must be non-decreasing.
%\end{remark}
Due to their ease in implementation and enabling efficient computation, structural results of the optimal deterministic and stationary policy are very much appealing to decision makers~\cite{puterman2005markov}. Thanks to the threshold structure, one only needs to store the transition points \textit{a priori}, and the online implementation is simply by comparisons.  Specialized algorithms can be developed to search among a special class (much smaller) of policies instead of general backward induction algorithms (less efficient)~\cite{puterman2005markov}.

%{\color{red} corollary consistently attack until a successful transmission}

\begin{remark} \label{remark:threshold}
The threshold structure  can be extended to cases with general $\mathtt{M}$ and $\mathtt{N}$. For $1\leq i \leq \mathtt{M}$, define $j_i^- \triangleq (j_1,\ldots,j_{i-1},j_{i+1},\ldots,j_\mathtt{M})$ as the state of the whole system except for the $i$-th system. Then the optimal policy has the following threshold structure. Let state $s=(j_1,\ldots,j_\mathtt{M})$, there exist measurable functions $l_i:\mathbb{N}^{\mathtt{M}-1}\mapsto N$ such that for any $1\leq i \leq \mathtt{M}$, the optimal policy $f^*$ has the form:
\begin{enumerate}
  \item if $j_i \geq l_i(j_i^-)$,  $f^*(s) \in \mathbb E_i$;
  \item if $j_i < l_i(j_i^-)$, $f^*(s) \in \Omega \backslash \mathbb E_i$,
\end{enumerate}
where $\mathbb E_i$ represents the feasible attack attention allocation subset such that the $i$-th system is under attack:
\[\mathbb E_i \triangleq \left\{\eta\in\{0,1\}^\mathtt{M}: \sum_{i=1}^\mathtt{M} \eta_{[i]} \leq \mathtt{N}, \eta_{[i]} =1   \right\}.\]
%and $\overline{E_i}$ is its complement: $\overline{E_i} = \Omega \backslash E_i$.
%\[\overline{\mathbf{e}_i} \triangleq \big\{\eta\in\{0,1\}^\mathtt{M}: \sum_{i=1}^\mathtt{M} \eta_{[i]} \leq \mathtt{N}, \eta_{[i]} =0   \big\}.  \]
What is more, the functions $l_i,1\leq i \leq \mathtt{M}$ are such that at each time there are exactly $\mathtt{N}$ systems to be attacked.
\end{remark}

%We now consider homogenous systems: systems are the same (i.e., each system has the same dynamic~\eqref{eqn:system-model}), and the packet arrival rates of the communication channels are identical either with or without attacks.
We now consider homogeneous models where the system dynamics are the same and $\epsilon_i, \underline{\epsilon_i}, 1\leq i \leq \mathtt{M}$ are identical.
For the homogeneous models with general $\mathtt{M}$ and $\mathtt{N}$, we propose a myopic policy as follows. \emph{At each time $k$, the attacker attacks the $\mathtt{N}$ systems with largest $\tau_{k-1}^{(i)}$.} Denote this myopic policy by $\pi_{\rm m}$.
Then based on the above threshold structure and the symmetry of homogeneous models, one easily obtains the following corollary, the proof of which is omitted.
\begin{corollary}  \label{Corollary:Myopic}
The myopic policy $\pi_{\rm m}$ is optimal to Problem~\ref{problem:1} for homogeneous models, i.e.,
$\bm{R} (\pi_{\rm m}) = \sup_{\pi\in\Pi}\: \bm {R} (\pi)$.
\end{corollary}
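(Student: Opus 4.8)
The plan is to reduce the homogeneous case with general $\mathtt{M}$ and $\mathtt{N}$ to the structural result of Remark~\ref{remark:threshold} and exploit the complete exchangeability of the systems. First I would note that since the systems are identical (same $A_i, C_i, Q_i, R_i$, hence same $h$ and same $\hat P$, and same $\epsilon, \underline{\epsilon}$), the one-stage reward $r(s,a) = \sum_i \mathrm{Tr}(h^{j_i}(\hat P))$ and the transition kernel $\bm P(\cdot|\cdot,\cdot)$ are invariant under any permutation $\sigma$ of the coordinate indices: $r(\sigma s, \sigma a) = r(s,a)$ and $\bm P(\sigma s'|\sigma s,\sigma a) = \bm P(s'|s,a)$. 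Therefore the MDP is symmetric, and the optimal differential value function $q$ from Theorem~\ref{theorem:existence} can be taken to be symmetric, $q(\sigma s) = q(s)$ — if $q$ solves the Bellman equation~\eqref{eqn:differentialEquation}, so does $s\mapsto q(\sigma s)$ with the same $\varrho^*$, and one may symmetrize (or simply fix a canonical symmetric solution via the value-iteration construction, which preserves symmetry at every step).

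Next I would invoke Remark~\ref{remark:threshold}: the optimal policy $f^*$ attacks exactly $\mathtt{N}$ systems at each state, and for each $i$ there is a threshold function $l_i(j_i^-)$ so that system $i$ is attacked iff $j_i \geq l_i(j_i^-)$, where $j_i^-$ is the vector of the other coordinates. By the permutation symmetry just established, all the $l_i$ are the same function $l$ (up to the relabelling of arguments), and $l$ is symmetric in its $\mathtt{M}-1$ arguments. The content I need is then: given that $q$ is symmetric and nonincreasing in the "attacked set" in the appropriate sense, the set of $\mathtt{N}$ indices selected by $\arg\max_{a}\{r(s,a)-\varrho^*+\bm G(q,s,a)\}$ is precisely the set of the $\mathtt{N}$ largest coordinates of $s=(\tau_{k-1}^{(1)},\dots,\tau_{k-1}^{(\mathtt{M})})$, which is exactly $\pi_{\rm m}$. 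Concretely: because $r$ does not depend on $a$, maximizing reduces to maximizing $\bm G(q,s,a) = \sum_{s'} q(s')\bm P(s'|s,a)$ over the choice of which $\mathtt{N}$ coordinates get the attacked transition law; since the coordinates are independent under $\bm P$, and attacking coordinate $i$ only changes the marginal law of $\tau^{(i)}$ (shifting probability mass from "reset to $0$" toward "increment to $j_i+1$"), a pairwise exchange argument shows that swapping the attack from a coordinate with value $j_a$ to one with a strictly larger value $j_b$ can only increase $\bm G$, provided $q$ is, roughly, "supermodular/monotone" in the right direction — this is where the threshold monotonicity of Remark~\ref{remark:threshold} enters, guaranteeing $l$ is monotone so that "attack the larger coordinate" is consistent across all pairs simultaneously.

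The main obstacle I anticipate is the pairwise-exchange step: making rigorous the claim that, for the symmetric $q$, attacking the coordinate with the larger $\tau$ is never worse than attacking the one with the smaller $\tau$, simultaneously over all pairs, so that the global optimum is the "top $\mathtt{N}$" set rather than merely a local one. This requires either (i) a monotonicity/convexity property of $q$ in each coordinate inherited from value iteration (each $\mathrm{Tr}(h^{j}(\hat P))$ is nondecreasing and convex in $j$, and the DP operator preserves coordinatewise monotonicity, so $q$ is coordinatewise nondecreasing), combined with the stochastic-dominance fact that the attacked marginal first-order stochastically dominates the unattacked one; or (ii) directly quoting the threshold structure of Remark~\ref{remark:threshold} and showing its symmetric version forces the selected set to be the top-$\mathtt{N}$ set — if two coordinates $i,i'$ had $\tau^{(i)} > \tau^{(i')}$ with $i'$ attacked and $i$ not, the threshold conditions $j_i < l(j_i^-)$ and $j_{i'} \geq l(j_{i'}^-)$ together with monotonicity of $l$ yield a contradiction. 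I would write it via route (ii), since Remark~\ref{remark:threshold} is already available, leaving only the bookkeeping that "exactly $\mathtt{N}$ attacked" plus "threshold monotone" plus "full symmetry" $\Rightarrow$ "top $\mathtt{N}$", which is the essentially combinatorial core of the proof. Finally, having shown $f^* = \pi_{\rm m}$ as stationary deterministic policies, $\bm R(\pi_{\rm m}) = \bm J((0,\dots,0),f^*) = \sup_{\pi\in\Pi}\bm R(\pi)$ follows from Theorem~\ref{theorem:existence} and the equivalence of Problem~\ref{problem:1} with the MDP.
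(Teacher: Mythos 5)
Your proposal is correct and follows exactly the route the paper indicates for this corollary (whose proof it omits as "easily obtained"): permutation symmetry of the homogeneous MDP forces all the threshold functions $l_i$ of Remark~\ref{remark:threshold} to coincide with a single symmetric function, and the monotonicity of that threshold (the general-$\mathtt{M}$ analogue of the monotone critical curve in Theorem~\ref{theorem:OptimalAction}) then rules out attacking a coordinate with smaller $\tau$ while sparing one with larger $\tau$, so the attacked set is the top-$\mathtt{N}$ set, i.e., $\pi_{\rm m}$. Your route~(ii) with the pairwise contradiction $j_i < l(j_i^-) \leq l(j_{i'}^-) \leq j_{i'} < j_i$ is precisely the combinatorial core the authors have in mind.
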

Note that to implement the myopic policy $\pi_{\rm m}$, no specific model knowledge is required.
Instead, one only needs to know the realization of the packet arrival process.

\subsection{Explicit Asymptotic Optimal Policy}
When $\mathtt{M}$ is large, the ``curse of dimensionality" will render MDP numerical algorithms impractical.
Then for heterogeneous models, one may ask whether or not there exists an algorithm that resembles the above myopic policy.  The answer is positive.
In the following, we provide an algorithm that is quite easy to compute and implement. Furthermore, it is proved to be asymptotically optimal as $\mathtt{M}$ and $\mathtt{N}$ go to infinity.

\subsubsection{Virtual Attack Model} \label{section:VirtualAttacker}
To present the algorithm, we introduce an virtual attacker. Consider the $i$-th system \emph{in isolation}. Assume that an (virtual) attacker is able to attack the $i$-th system \emph{all the time}, while
if the attacker refuses to launch an attack at some time, it receives an extra \emph{constant} ``subsidy" $z_i$ (which is independent of the system state $\tau_{k-1}^{(i)}$). In other words, the one-stage reward is given by
\begin{align*}
  r_i(\tau_{k-1}^{(i)},\eta_k^{(i)})
             =   \mathrm{Tr}(h_i^{\tau_{k-1}^{(i)}}(\hat{P}^{(i)})) +  (1 - \eta_k^{(i)})z_i.
\end{align*}
The goal of the attacker is to maximize the averaged infinite-horizon accumulated reward as in Problem~\ref{problem:1} for the sole $i$-th system:
$ \mathop{ \lim\inf}_{\mathtt T\to\infty} \frac{1}{\mathtt T} \bm{E}\left[\sum_{k=1}^{\mathtt T} r_i(\tau_{k-1}^{(i)},\eta_k^{(i)})  \right]$.
Denote the optimal rule for the state $\tau_{k-1}^{(i)} = j$ with $j\in\mathbb{N}$ when the subsidy is $z_i$ as $d_i^*(j,z_i)$\footnote{We use this notation to emphasize the dependence on $z_i$.  It quite easy to show that the optimal rule is stationary, we thus omit the time index $k$.}: $d_i^*(j,z_i) = 0$ if no attacks and $d_i^*(j,z_i) = 1$ otherwise.

To maximize the average infinite-horizon reward for the sole $i$-th system, one can also formulate it as an MDP problem and prove the existence of optimal deterministic and stationary policy. Furthermore, as for Theorem~\ref{theorem:OptimalAction}, one can prove the monotonicity of the differential value function as well, based on which the threshold structure of $d_i^*(j,z_i)$ can be proved. Specifically, for any $1\leq i\leq \mathtt{M}$, given $z_i$, $d_i^*(j,z_i)$ has a form as
\begin{align}  \label{eqn:thresholdAsymptotic}
  d_i^*(j,z_i)
  =&\left\{
        \begin{array}{ll}
           1, & \text{if $j \geq \ell_i(z_i)$}, \\
           0, & \text{if $j < \ell_i(z_i)$},
        \end{array}
    \right.
\end{align}
where $\ell_i(z_i)$ is a function of $z_i$.

\subsubsection{Index-based Policy}
We introduce an index $o_i(\cdot):\mathbb{N} \mapsto \mathbb{R}$ associated with $\tau_{k-1}^{(i)}=j$, which satisfies that, for $1\leq i \leq \mathtt{M}$,
\begin{align*}
&v_i(j)\left[ \frac{1-(1-\epsilon_i)^j}{\epsilon_i} o_i(j)  + \sum_{n=0}^{j} \mathrm{Tr}(h_i^n(\hat{P}^{(i)}))(1-\epsilon_i)^n \right. \\
&\quad \left.+\, (1-\epsilon_i)^j \sum_{n=1}^{\infty} \mathrm{Tr}(h_i^{n+j}(\hat{P}^{(i)}))(1-\underline{\epsilon_i})^n \right]\\
=&v_i(j+1)\left[ \frac{1-(1-\epsilon_i)^{j+1}}{\epsilon_i} o_i(j)  + \sum_{n=0}^{j} \mathrm{Tr}(h_i^n(\hat{P}^{(i)}))(1-\epsilon_i)^n \right. \\
&\quad \left. +\,  (1-\epsilon_i)^{j+1} \sum_{n=0}^{\infty} \mathrm{Tr}(h_i^{n+j+1}(\hat{P}^{(i)}))(1-\underline{\epsilon_i})^n\right], \addtag \label{eqn:AsymptoticIndex}
\end{align*}
where $v_i(j)$ is computed by
\begin{align*}
v_i(j) = \frac{1}{{{\epsilon_i}^{-1}-(1-\epsilon_i)^j}{\epsilon_i}^{-1}  +   (1-\epsilon_i)^j {\underline{\epsilon_i}^{-1} }}.
%v_i(j+1)\left[ \frac{1-(1-\epsilon_i)^{j+1}}{\epsilon_i}  + (1-\epsilon_i)^{j+1} \frac{1}{\underline{\epsilon_i} } \right] =& 1.
\end{align*}
Notice that $o_i(\cdot)$ only depends on the $i$-th system and is irrelative with the others. Notice also that $o_i(j)$ in~\eqref{eqn:AsymptoticIndex} can be interpreted as the subsidy such that when the $i$-th system state $\tau_{k-1}^{(i)} = j$, the action ``attack" and ``not attack" are equally attractive if the single $i$-th system is considered.
We propose an index-based policy, denoted by $\pi_{\rm d}$, as follows.  \emph{At each time $k$, the attacker attacks the $\mathtt{N}$ systems of greatest index $o_i(\tau_{k-1}^{(i)})$.} We then have the following theorem.
\begin{theorem} \label{theorem:indexpolicy}
The index-based policy $\pi_{\rm d}$ is asymptotically optimal to Problem~\ref{problem:1}. That is,
as $\mathtt{M}\to\infty$ and $\mathtt{N}\to\infty$ with $\mathtt{N}<\mathtt{M}$,
$\bm {R} (\pi_{\rm d}) \to \bm {R}^*$, where $\bm R^* = \sup_{\pi\in\Pi}\: \bm {R} (\pi)$.
\end{theorem}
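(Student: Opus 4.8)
The plan is to recognize Problem~\ref{problem:1} for homogeneous-in-structure-but-heterogeneous-in-parameters models as a \emph{restless multi-armed bandit} (RMAB) problem and invoke the Whittle index heuristic together with Weber--Weiss-type asymptotic optimality. Each system $i$ is an ``arm'' whose internal state is $\tau_{k-1}^{(i)}\in\mathbb{N}$; at each time exactly $\mathtt{N}$ of the $\mathtt{M}$ arms are ``activated'' (attacked). The key structural ingredient, already established in the excerpt, is that the single-arm problem with subsidy $z_i$ (the virtual attack model of Section~\ref{section:VirtualAttacker}) has a threshold optimal policy $d_i^*(j,z_i)$ of the form~\eqref{eqn:thresholdAsymptotic}. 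The first step is to verify \emph{indexability} of each arm: that the set of states in which ``not attack'' is optimal, namely $\{j:j<\ell_i(z_i)\}$, expands monotonically from $\emptyset$ to all of $\mathbb{N}$ as the subsidy $z_i$ increases from $-\infty$ to $+\infty$. This is exactly where the monotonicity of $\ell_i(z_i)$ in $z_i$ is needed; it follows from the threshold structure plus a standard interchange/coupling argument on the single-arm MDP (increasing the subsidy can only make ``not attack'' more attractive, uniformly in the state). Given indexability, the Whittle index $o_i(j)$ is \emph{defined} as the value of the subsidy that makes ``attack'' and ``not attack'' equally attractive in state $j$; I would then derive the closed-form characterization~\eqref{eqn:AsymptoticIndex} by writing down the average-reward Bellman/renewal equations for the single-arm chain under the threshold policy with threshold at $j$ versus $j+1$, using that under a threshold-$j$ policy the arm is a positive-recurrent renewal process whose cycle has the geometric/deterministic structure dictated by $p_i(\cdot|\cdot,\cdot)$, and whose stationary probabilities are the $v_i(j)$ given in the statement. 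Equating the two average rewards and solving for the subsidy yields~\eqref{eqn:AsymptoticIndex}.

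The second, harder step is the asymptotic-optimality claim itself. Here I would follow the Weber--Weiss framework: (i) the Whittle index policy $\pi_{\rm d}$ is asymptotically optimal along any sequence with $\mathtt{M},\mathtt{N}\to\infty$ and $\mathtt{N}/\mathtt{M}\to\alpha\in(0,1)$ \emph{provided} a certain differential equation describing the mean-field/fluid limit of the fraction of arms in each state has a globally asymptotically stable fixed point. The standard route is: relax the hard per-slot constraint ``exactly $\mathtt{N}$ active'' to the time-averaged constraint, obtain via Lagrangian decomposition an upper bound $\bm{R}^{\rm relax}(\alpha)$ on $\frac{1}{\mathtt{M}}\bm{R}^*$ that decouples across arms and is attained by the family of threshold policies with a common subsidy $z^*$ chosen so that the long-run active fraction equals $\alpha$; then show that the index policy, run on the actual $\mathtt{M}$-arm system, has per-arm reward converging to this same bound as $\mathtt{M}\to\infty$. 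The convergence uses a propagation-of-chaos / law-of-large-numbers argument on the empirical measure of arm states: the per-slot coupling constraint only perturbs the dynamics of $o(\sqrt{\mathtt{M}})$ arms (those straddling the current index threshold), so the empirical state distribution converges to the deterministic mean-field trajectory, which in turn converges to the fixed point delivering $\bm{R}^{\rm relax}(\alpha)$.

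The main obstacle is precisely the \emph{global stability of the mean-field fixed point} — this is the hypothesis that can fail in general RMAB problems and is what makes Whittle-index asymptotic optimality a theorem rather than a tautology. I expect to discharge it by exploiting the very special one-dimensional ``birth and catastrophe'' structure of each arm: from any state $j$ the chain either jumps to $0$ (successful transmission) or increments to $j+1$, with probabilities depending only on whether the arm is active. Under a threshold policy the induced per-arm chain is a quasi-birth--death chain that is stochastically monotone, and the associated mean-field ODE is (or can be shown to be) cooperative/monotone in the sense of Hirsch, so that trajectories are order-preserving and converge to the unique fixed point. I would isolate this as a lemma: ``the fluid limit of the homogenized single-arm chain under a threshold control is globally attracted to its unique equilibrium,'' prove it via monotone-systems theory (or, failing that, via an explicit Lyapunov function built from $\sum_j$ (tail probabilities) weighted by the convex increments $\mathrm{Tr}(h_i^{j}(\hat{P}^{(i)}))$), and then cite Weber--Weiss to conclude. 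A secondary technical point needing care is that the state space is \emph{countably infinite}: I must ensure uniform integrability of $\mathrm{Tr}(\tilde P_k^{(i)})$ — guaranteed by the standing assumption $\underline{\epsilon_i}>1-|A_i|^{-2}$, which makes even the always-attacked chain positive recurrent with finite mean reward — so that the interchange of limits ($\mathtt{T}\to\infty$, $\mathtt{M}\to\infty$) in $\bm{R}(\cdot)$ is justified and the truncation-to-finite-state approximations converge.
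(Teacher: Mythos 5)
Your proposal follows essentially the same route as the paper: recast Problem~\ref{problem:1} as a restless multi-armed bandit, establish indexability of each arm via the single-arm subsidy problem of Section~\ref{section:VirtualAttacker}, identify $o_i(\cdot)$ in~\eqref{eqn:AsymptoticIndex} as the Whittle index (the subsidy equalizing the two actions, derived from the renewal/stationary structure of the threshold-controlled arm), and then appeal to Whittle/Weber--Weiss asymptotic optimality. The paper proves indexability slightly differently from your coupling sketch --- it writes down the equilibrium distributions $p$ and $p'$ induced by threshold policies $j^*$ and $j^\diamond<j^*$, and uses the optimality of $j^*$ at subsidy $z$ plus the monotonicity of the averaged-subsidy term in $z$ to conclude that ``not attack'' remains optimal for all $z'\ge z$ --- but both arguments deliver the same lemma. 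The substantive difference is the final step. The paper concludes by citing Whittle's conjecture and concedes in a footnote that the Weber--Weiss sufficient condition (global attractivity of the mean-field fixed point, i.e.\ no limit cycles in the fluid ODE) is \emph{not verified}, arguing only that failures are ``extremely rare'' and deviations ``minuscule.'' You correctly identify this as the main obstacle and propose to actually discharge it by exploiting the birth-and-catastrophe structure of each arm (monotone-systems theory or an explicit Lyapunov function), together with the uniform-integrability issue raised by the countably infinite state space. If you carry out that stability lemma --- which in your proposal is still a plan rather than a proof, and which also requires the fixed-ratio assumption $\mathtt{N}/\mathtt{M}\to\alpha$ implicit in Weber--Weiss but absent from the theorem statement --- your argument would be strictly stronger than the paper's. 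As written, both proofs rest on the same unverified hypothesis; yours at least names it explicitly and outlines a plausible path to closing it.
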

\begin{remark}
Numerical simulations in Section~\ref{section:simulation} show that the index-based policy $\pi_{\rm d}$ works quite well even when $\mathtt{M}$ and $\mathtt{N}$ are small.
\end{remark}

\begin{remark}
In some scenarios, the attacker might get a larger reward for attacking one system than the other. Then one may add different weight to attacks on different channels, i.e., the reward in~\eqref{eqn:Reward} is replaced with
\begin{align*}
\bm R (\pi) = \mathop {\lim\inf}_{\mathtt T\to \infty} \frac{1}{\mathtt T} \bm{E}\left[\sum_{k=1}^{\mathtt T} \sum_{i=1}^\mathtt{M} w_i\mathrm{Tr}(\tilde{P}_k^{(i)})\right],
\end{align*}
with $w_i \in\mathbb{R}_+$ being weight coefficients. The main results in this paper, Theorems~\ref{theorem:existence}--\ref{theorem:indexpolicy}, still hold. Amending the reward function by adding into the coefficients, the analysis in the appendices remains valid.

%Then all the results obtained in this paper remain valid, which can be verified as follows. To obtain the results in Theorem~\ref{theorem:existence} and Theorem~\ref{theorem:OptimalAction}, one can use the very same arguments as in Appendix A and B, only
%with the one stage reward function in~\eqref{eqn:OneStageReward} replaced with
%\begin{align*}
%  r(s=(j_1,j_2),a) = w_1\mathrm{Tr}(h_1^{j_1}(\hat{P}^{(1)})) + w_2\mathrm{Tr}(h_2^{j_2}(\hat{P}^{(2)})).
%\end{align*}
%Note that though the different weight $w_i$ do not affect the threshold structure in Theorem~\ref{theorem:OptimalAction}, they do cause a different critical curve $l_c(j_1,j_2)=0$ that partitions $\mathbb{N}^2$. Furthermore, following arguments in Appendix C, one sees that asymptotic optimality of the index-based policy remains and equation~\eqref{eqn:AsymptoticIndex} computing the index $o_i(\cdot)$ is replaced with
%\begin{align*}
%&v_i(j)\left[ \frac{1-(1-\epsilon_i)^j}{\epsilon_i} o_i(j)  + w_i \sum_{n=0}^{j} \mathrm{Tr}(h_i^n(\hat{P}^{(i)}))(1-\epsilon_i)^n \right. \\
%&\quad \left.+\, w_i(1-\epsilon_i)^j \sum_{n=1}^{\infty} \mathrm{Tr}(h_i^{n+j}(\hat{P}^{(i)}))(1-\underline{\epsilon_i})^n \right]\\
%&=v_i(j+1)\left[ \frac{1-(1-\epsilon_i)^{j+1}}{\epsilon_i} o_i(j)  + w_i\sum_{n=0}^{j} \mathrm{Tr}(h_i^n(\hat{P}^{(i)}))(1-\epsilon_i)^n \right. \\
%&\:\qquad \left. +\,  w_i(1-\epsilon_i)^{j+1} \sum_{n=0}^{\infty} \mathrm{Tr}(h_i^{n+j+1}(\hat{P}^{(i)}))(1-\underline{\epsilon_i})^n\right].
%\end{align*}
\end{remark}

\section{Numerical Examples} \label{section:simulation}
In this section, we use numerical examples to illustrate the threshold structure of the optimal policy (Theorem~\ref{theorem:OptimalAction}), the optimality of the myopic policy for homogeneous models (Corollary~\ref{Corollary:Myopic}) and the asymptotic optimality of the index-based policy (Theorem~\ref{theorem:indexpolicy}).

\begin{example} \label{example:first}
We let $\mathtt{M}=2$ and $\mathtt{N}=1$. The parameters involved are as follows:
\begin{align*}
 A_1 =&  \left[ \begin{array}{cc}
1.2 & 0.2  \\
0.3 & 1
\end{array} \right],
\quad
 A_2 =  \left[ \begin{array}{cc}
1.2 & 0.15  \\
0 & 1.1
\end{array} \right], \\
Q_1 =&  \left[ \begin{array}{cc}
2 & 0  \\
0 & 1
\end{array} \right],
\quad
Q_2 =  \left[ \begin{array}{cc}
1 & 0.5  \\
0.5 & 0.5
\end{array} \right],
\end{align*}
$C_1 = [1,0], C_2 = [1,0.2], R_1 = 1, R_2 =3, \epsilon_1 = 0.95, \underline{\epsilon_1} = 0.5, \epsilon_2 = 0.9$ and $\underline{\epsilon_2} = 0.4$. Notice that the steady-state local estimation error covariances are
\begin{align*}
\hat{P}^{(1)} =&  \left[ \begin{array}{cc}
0.79 & 0.54  \\
0.54 & 8
\end{array} \right],
\quad
\hat{P}^{(2)} =  \left[ \begin{array}{cc}
1.54 & -0.49  \\
-0.49 & 11.87
\end{array} \right].
\end{align*}
We compute the optimal policy and optimal averaged reward using the value iteration algorithm. To cope with the countable infinity of the state space, the ideas in~\cite[Chapter 8]{sennott2009stochastic} are borrowed. The details of the algorithm are as follows.
We truncate the state space with $N \in\mathbb{N}$, i.e., the truncated state space $\mathbb{S}_N \triangleq \{0,\ldots,N\}^2$. Compute the value function (defined on $\mathbb{S}_N$) iteratively by
\begin{align*}
\bm J^N_n(s) = \max_{a\in\mathbb{A}}\{r(s,a)+
\bm {G}(\bm J^N_{n-1},s,a)\},\: \forall s\in\mathbb{S}_N
\end{align*}
with $\bm J^N_0(s)=0$. Since the value iteration algorithm converges in our case (see~\cite{zhu2005value}), then for any $N$, let
\begin{align*}
 \varrho^*_N \triangleq& \lim_{n\to\infty} \bm J^N_n((0,0))-\bm J^N_{n-1}((0,0)), \\
 q_N(s) \triangleq & \lim_{n\to\infty} \bm J^N_n(s) - \bm J^N_n((0,0)).
\end{align*}
One thus obtain the differential value function $q(s) = \lim_{N\to\infty}q_N(s), \forall s\in\mathbb{S}$. The $N$ is chosen such that $|\varrho^*_N-\varrho^*_{N-1}|/\varrho^*_{N-1}$ is smaller than a prescribed tolerance error. In our simulation, we let $N=19$ and the error is 0.01. We obtain that the optimal averaged reward is $50.21$ and the optimal policy is depicted as in Fig.~\ref{Fig:OptimalPolicy}. One may see that the optimal policy has the threshold structure stated in Theorem~\ref{theorem:OptimalAction}.

\begin{figure}
  \centering
  \includegraphics[scale=0.47]{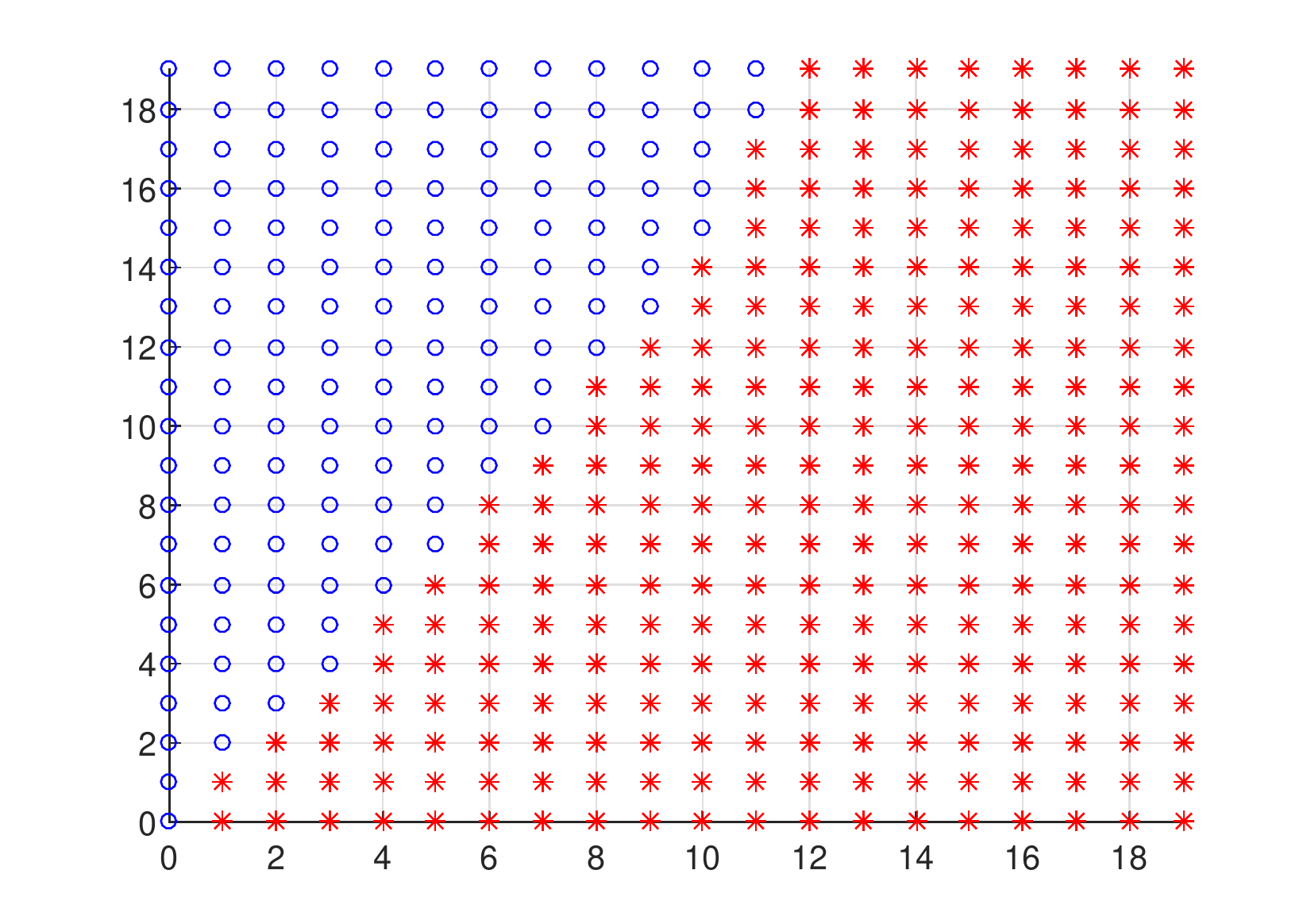}\vspace{-3mm}
  \caption{Optimal action of state $s=(j_1,j_2)$ with x-axis presenting $j_1$ and y-axis $j_2$. The  red stars and blue circles indicate the action $e_1$ and $e_2$, respectively. } \label{Fig:OptimalPolicy}
\end{figure}
\end{example}

\begin{example} \label{example:second}
We shall show that the myopic policy is optimal for homogeneous models. To this end, each system is the same as the $2$nd system in Example~\ref{example:first}, and the state space is also truncated with $N=19$. In the first case, we let $\mathtt{M}=2, \mathtt{N}=1$; the second case $\mathtt{M}=3,\mathtt{N}=2$ and the third case $\mathtt{M}=5,\mathtt{N}=2$. The averaged reward obtained by the MDP algorithm and the myopic policy are shown in Table~\ref{table:homoIndex}. As a baseline, we also simulate a random policy: at each time, $\mathtt{N}$ out of the $\mathtt{M}$ systems are randomly and uniformly chosen to be attacked.   One sees that the averaged rewards obtained by the optimal MDP algorithm and the myopic policy are quite close, which verifies the optimality of the myopic policy. Also, compared with the random policy, the myopic policy has a significant performance improvement.
%Note also that because of symmetry of homogeneous models and monotone of the index $o_i(\cdot)$, one does not need to actually compute the index. Instead, the index-based policy is equivalent to that at each time $k$, attack the $\mathtt{N}$ systems with largest $\tau_{k-1}^{(i)}$.
\begin{table} [!htb]
\center
  \caption{Averaged reward obtained by the MDP algorithm (denote by the symbol $\spadesuit$), the myopic policy ($\clubsuit$) and random policy ($\heartsuit$) in different cases for homogeneous models.}
  \begin{tabular}{ c || c | c| c }
    \hline
    Case No. & $\spadesuit$ & $\clubsuit$ & $\heartsuit$ \\ \hline
    1 & 40.98 & 40.82  & 29.94 \\ \hline
    2 & 71.49 & 71.37 & 55.91 \\ \hline
    3 & 93.75  & 93.46 & 71.45 \\ \hline
  \end{tabular}
 \label{table:homoIndex}
\end{table}
\end{example}

\begin{example}
We do simulations for four cases with heterogeneous models: in the first case, we let $\mathtt{M}=2, \mathtt{N}=1$; the second case $\mathtt{M}=3,\mathtt{N}=2$, the third case $\mathtt{M}=5,\mathtt{N}=2$ and the fourth case $\mathtt{M}=6, \mathtt{N}=3$\footnote{We do not simulate asymptotic cases (i.e., $\mathtt{M}$ and $\mathtt{N}$ are sufficiently large) since state space size increases exponentially with respect to $\mathtt{M}$, the memory required would be beyond our capabilities.}. In each case the first $\floor{\mathtt{M}/2}$ systems are the same as the $1$-st system in Example~\ref{example:first}, while the remaining are the same as the $2$nd system.
We truncate the state space with $N=12$, which is mainly due to computation accuracy of index $o_i(\cdot)$ defined in~\eqref{eqn:AsymptoticIndex}. Specifically, since as $j\to\infty$, $v_i(j)[1-(1-\epsilon_i)^j]/\epsilon_i \to  v_i'(j) [1-(1-\epsilon_i)^{j+1}]/\epsilon_i$, then when $j$ is large enough ($N=13$ for $1$-st system and $N=17$ for $2$-th system), numerical computing software (Matlab in our simulation) cannot provide accurate value of $o_i(\cdot)$.  The averaged reward obtained by the MDP algorithm, the index-based policy and the random policy (the same as in the second example) are shown in Table~\ref{table:heterIndex}, from which one sees that the index-based policy approximates the MDP algorithm surprisingly well even in these non-asymptotic cases. As in Example~\ref{example:second}, the index-based policy has a significant performance gain over the random policy. To better illustrate this performance gain, we further simulate the index-based policy and the random policy for some large $\mathtt{M}$'s and $\mathtt{N}$'s (we do not simulate the MDP algorithm due to capacity limitation). The results are shown in Table~\ref{table:heterIndex2}.

\begin{table} [!htb]
\center
  \caption{Averaged reward obtained by the MDP algorithm (denote by the symbol $\spadesuit$), the index-based policy ($\diamondsuit$) and random policy ($\heartsuit$) in different cases for heterogeneous models.}
  \begin{tabular}{ c || c | c | c }
    \hline
    Case No. & $\spadesuit$ & $\diamondsuit$ & $\heartsuit$ \\ \hline
    1 & 44.88 & 42.72 & 28.15 \\ \hline
    2 & 80.50 & 78.97 & 51.97 \\ \hline
    3 & 106.37  & 103.4 & 69.03  \\ \hline
    4 & 136.22  & 131.94 & 84.5 \\ \hline
  \end{tabular}
 \label{table:heterIndex}
\end{table}

\begin{table}[!htb]
\center
  \caption{Averaged reward obtained by the index-based policy (denote by the symbol $\diamondsuit$) and random policy ($\heartsuit$) in different cases (with large $\mathtt{M}$'s and $\mathtt{N}$'s) for heterogeneous models.}
  \begin{tabular}{ c || c | c }
    \hline
    Case No.  & $\diamondsuit$ & $\heartsuit$  \\ \hline
   % 1 & 2 228 & 1 408  \\ \hline
     1 & 4 446 & 2 816  \\ \hline
     2& 18 971  & 12 658 \\ \hline
    3 & 22 269  & 14 082   \\ \hline
    4 & 25 733  & 16 885   \\ \hline

  \end{tabular}
 \label{table:heterIndex2}
\end{table}
\end{example}

\section{Conclusion} \label{section:conclusion}
In this paper, attack allocation on remote state estimation in multi-systems was considered. The problem was solved by formulating it as an MDP problem, of which an optimal deterministic and stationary policy exists. Threshold structure of the optimal policy was proved, by which both online implementation and off-line computation overhead can be reduced. To overcome the curse of dimensionality, an asymptotically optimal index-based policy, which is quite easy to compute and implement, was provided. The results were verified by numerical simulations. In particular, our numerical examples illustrated that the index-based policy works well even when the number of systems is small. An interesting direction of future works is to investigate the problem in a game-theoretic way, where the sensors (which have limited communication energy) are aware of  the presence of the attacker.

%\section*{Appendix A\\Kalman filter}
%The MMSE estimate $\hat{x}^{(i)}_k$ is computed by the standard Kalman filter~\cite{anderson2012optimal} as follows.

\section*{Appendix A\\ Proof of Theorem~\ref{theorem:existence}}
We first show that our MDP model has some ``nice" properties, by which Theorem~\ref{theorem:existence} can be proved. To this end, we define a function $\bm W: \mathbb{S} \mapsto [1,\infty)$ as
  \begin{align*}
  &\bm  W(s=(j_1,j_2))\\
  &=\left\{
        \begin{array}{ll}
           2, & \text{if $j_1 =0, j_2 = 0$}, \\
           \bm W_1(j_1) + \bm W_2(j_2), & \text{otherwise},
        \end{array}
    \right.  \addtag \label{eqn:FunctionW}
\end{align*}
with $\bm W_1,\bm W_2: \mathbb{N}\mapsto [1,\infty)$ as
  \begin{align*}
  \bm W_1(j)
  &=\left\{
        \begin{array}{ll}
           \phi \lambda_1^j, & \text{if $j \leq N_1$}, \\
           \phi \lambda_1^{N_1} |A_1|^{2(j-N_1)}, & \text{if $j > N_1$},
        \end{array}
    \right.  \\
  \bm W_2(j)
  &=\left\{
        \begin{array}{ll}
           \phi \lambda_2^j, & \text{if $j \leq N_2$}, \\
           \phi \lambda_2^{N_2} |A_2|^{2(j-N_2)}, & \text{if $j > N_2$},
        \end{array}
    \right.
\end{align*}
where $\phi, \lambda_i, N_i$ are parameters satisfying the following:
for each $i=1,2$,
\begin{align*}
\lambda_i >& 1, \\
(1-\underline{\epsilon_i}) ( \lambda_i - 1 )  \leq& \frac{1}{2}\underline{\epsilon_1}\underline{\epsilon_2},  \addtag \label{eqn:lambdai} \\
\phi[\beta - (1-\frac{1}{2}\underline{\epsilon_1}\underline{\epsilon_2})] \geq& 1, \addtag \label{eqn:initialphi} \\
\phi \lambda_i^{N_i} [\beta - (1-\underline{\epsilon_i})|A_i|^2 ] \geq& \phi + 1, \addtag \label{eqn:beta2}
\end{align*}
with a constant $\beta<1$, which is bounded below by
\begin{align} \label{eqn:beta}
\beta > \max\left(1-\frac{1}{2}\underline{\epsilon_1}\underline{\epsilon_2}, (1-\underline{\epsilon_i})|A_i|^2\right),\: i=1,2.
\end{align}
One may see that since $\phi > 1, \lambda_i>1$, $\bm W_i$ together with $\bm W$ are well defined (i.e., they are all greater than $1$).

About $\bm W$, we have the following two lemmas. Before proceeding, we need the following definition.
\begin{definition}
Given a function $\bm W:\mathbb{S} \mapsto [1,\infty)$, for a function $u: \mathbb{S} \mapsto \mathbb{R}$, define its $\bm W-$norm as
\[\|u\|_W = \sup_{s\in\mathbb{S}} |u(s)|/\bm W(s).\]
Let $\mathbb{B}_{\bm W}(\mathbb{S})$ be the normed linear space of measurable functions $u$ on $\mathbb{S}$ with $\|u\|_{\bm W} < \infty.$
\end{definition}

\begin{lemma} \label{lemma:uniformergodic}
For any $f\in\mathbb{F}$, the  transition kernel $\bm P(\cdot|\cdot,f(\cdot))$ is \emph{uniformly} $\bm W-$geometrically ergodic\footnote{Interested readers are referred to~\cite{meyn1993markov} to see a more elegant definition, which, however, requires more background knowledge, and is thus omitted here.},
i.e., for any $f\in\mathbb{F}$ and any measurable function $u\in\mathbb{B}_{\bm W}(\mathbb{S})$, there exists a probability measure $\mu_f$ (depending on $f$) and constants $L$ and $\delta<1$, which are independent of $f$, such that  for any $s\in\mathbb{S}, k\in\mathbb{N}$,
\begin{align} \label{eqn:uniformErgodic}
\left|\bm G(u,s,f(s)) -\int u \mathrm{d}\mu_f\right| \leq \|u\|_{\bm W} \bm W(s) L \delta^k.
\end{align}
\end{lemma}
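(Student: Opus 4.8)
The plan is to establish that the function $\bm W$ defined in~\eqref{eqn:FunctionW} is a Lyapunov (drift) function for every Markov chain induced by a stationary deterministic policy $f\in\mathbb F$, and then to invoke the standard geometric ergodicity result for $\bm W$-uniformly ergodic chains (see~\cite{meyn1993markov}). Concretely, I would first verify the \emph{drift inequality}
\[
\bm G(\bm W, s, f(s)) \;=\; \sum_{s'\in\mathbb S}\bm W(s')\bm P(s'|s,f(s)) \;\leq\; \beta\,\bm W(s) + b\,\mathbf{1}_{\{(0,0)\}}(s)
\]
for some constant $b<\infty$ and the constant $\beta<1$ fixed in~\eqref{eqn:beta}, \emph{uniformly over all} $f\in\mathbb F$. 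This is exactly what inequalities~\eqref{eqn:lambdai}--\eqref{eqn:beta2} are designed to give: since the chain on each coordinate either resets to $0$ (with probability at least $\underline{\epsilon_i}$, whatever the action) or increments by one, one computes $\bm G(\bm W,s,a)$ explicitly as a product of the two single-coordinate contributions, splits into the cases $j_i\leq N_i$ versus $j_i>N_i$, and checks that the geometric growth rates $\lambda_i$ (for small $j_i$) and $|A_i|^2$ (for large $j_i$) are tamed below $\beta$ by the choice of $\phi,\lambda_i,N_i$. The worst case is $s=(0,0)$, where the drift can increase $\bm W$; this is absorbed into the term $b\,\mathbf 1_{\{(0,0)\}}$, giving a \emph{single small} (indeed, finite) set as the drift's ``petite set''.

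Second, I would check the \emph{minorization / irreducibility} condition on that small set. Since $\{(0,0)\}$ is reached from any state $s$ in at most $\max(j_1,j_2)+1$ steps with positive probability (each coordinate resets with probability bounded below by $\underline{\epsilon_i}$ regardless of the action chosen by $f$), the singleton $\{(0,0)\}$ is an accessible atom, and from $(0,0)$ the one-step transition probabilities are bounded below uniformly in $f$ — indeed $\bm P((j_1',j_2')|(0,0),a)$ is a product of factors each in $\{\epsilon_i,\underline{\epsilon_i},1-\epsilon_i,1-\underline{\epsilon_i}\}$, all bounded away from $0$ by constants depending only on the model parameters, not on $f$. This gives a uniform (in $f$) minorization out of the atom, which combined with the uniform geometric drift yields, by the standard theorem (e.g.~\cite[Chapter 15--16]{meyn1993markov}), the existence of a unique invariant probability measure $\mu_f$ and constants $L$, $\delta<1$ \emph{independent of $f$} such that $\|\bm P^k(\cdot|s,f(\cdot)) - \mu_f\|_{\bm W} \leq L\,\bm W(s)\,\delta^k$; applying this bound to $u\in\mathbb B_{\bm W}(\mathbb S)$ and noting $\bm G(u,s,f(s))$ is the one-step expectation of $u$, i.e. the $k=1$ instance tested against the iterate, gives~\eqref{eqn:uniformErgodic} after relabelling $k$.

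The main obstacle I anticipate is \emph{bookkeeping the uniformity in $f$ carefully} — the whole point of the lemma is that $L$ and $\delta$ do not depend on the policy, so every estimate (the drift coefficient $\beta$, the minorization constant, the size of the small set) must be shown to depend only on $\{A_i,\underline{\epsilon_i},\epsilon_i\}$. This is where the precise, action-independent structure of the transition kernel $p_i(\cdot|\cdot,\cdot)$ matters: whichever action $f(s)$ prescribes, the reset probability at coordinate $i$ is at least $\underline{\epsilon_i}$ and the growth factor is at most $(1-\underline{\epsilon_i})$, so the same $\bm W$ and the same $\beta$ work for all $f$. A secondary technical point is that~\eqref{eqn:uniformErgodic} as stated bounds $|\bm G(u,s,f(s)) - \int u\,\mathrm d\mu_f|$ rather than the $k$-step quantity; I would reconcile this by observing that $\int u\,\mathrm d\mu_f = \int \bm G(u,\cdot,f(\cdot))\,\mathrm d\mu_f$ (stationarity) and that $\bm G(u,\cdot,f(\cdot)) \in \mathbb B_{\bm W}(\mathbb S)$ with controlled $\bm W$-norm, so the geometric convergence of the iterates transfers to the stated form with the displayed constants. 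No deep new idea is needed beyond the Meyn--Tweedie machinery; the content is entirely in verifying~\eqref{eqn:lambdai}--\eqref{eqn:beta} do their job.
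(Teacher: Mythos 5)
Your proposal follows essentially the same route as the paper's proof: verify the Foster--Lyapunov drift inequality $\bm G(\bm W,s,f(s))\leq \beta\,\bm W(s)+b\,\mathbf{1}_{\{(0,0)\}}(s)$ with constants uniform over $f$ (using exactly the action-independent bounds via $\underline{\epsilon_i}$ and the case split on $j_i$ relative to $N_i$ governed by~\eqref{eqn:lambdai}--\eqref{eqn:beta2}), establish a uniform minorization at the atom $(0,0)$, and invoke the Meyn--Tweedie computable-bounds theorems to obtain $L,\delta$ depending only on the drift and minorization constants, hence independent of $f$. Your additional remarks on accessibility of $(0,0)$ and on reconciling the one-step form of~\eqref{eqn:uniformErgodic} with the $k$-step convergence are sound refinements of the same argument.
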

\begin{proof}
We prove that
for each $f\in\mathbb{F}$,
there exist constant $0<\varpi<1$ and  $b$, which are independent of $f$, such that
\begin{equation}\label{eqn:atom}
\bm P \left((0,0)|(0,0),f((0,0))\right) \geq  \varpi
\end{equation}
and for any $s\in\mathbb S$
\begin{equation}\label{eqn:Lyapnov}
\bm G(\bm W,s, f(s))\leq  \beta \bm W(s) + b \mathbf{1}_{\{(0,0)\}}(s)
\end{equation}
      where $\bm W(\cdot)$ and $\beta$ are defined in~\eqref{eqn:FunctionW}~and~\eqref{eqn:beta}, respectively. Then by~\cite[Theorem 2.1 and 2.2]{meyn1994computable}, for each $f$, $L$ and $\delta$ in ~\eqref{eqn:uniformErgodic} can be chosen in terms of $\varpi, \beta, b$ (which are independent of $f$). The uniform ergodicity in Lemma~\ref{lemma:uniformergodic} thus can be established.

      Equation \eqref{eqn:atom} is trivial.  To show~\eqref{eqn:Lyapnov}, notice that when $s=(0,0)$, one may choose a sufficiently large $b$ such that~\eqref{eqn:Lyapnov} is satisfied. Let $s\triangleq(j_1,j_2)\neq (0,0)$, suppose the action is $e_1$, then
\begin{align*}
&\bm G(\bm W, s,f(s)) \\
= &  (1-\underline{\epsilon_1})\bm W_1(j_1+1) + (1-\epsilon_2)\bm W_2(j_2+1) \\
&+\underline{\epsilon_1}(1-\epsilon_2) \phi + (1-\underline{\epsilon_1})\epsilon_2 \phi + 2\underline{\epsilon_1}\epsilon_2 \\
\leq & (1-\underline{\epsilon_1})\bm W_1(j_1+1) + \underline{\epsilon_1}(1-\epsilon_2) \phi + 1  \addtag \label{eqn:Term1}\\
&+(1-\epsilon_2)\bm W_2(j_2+1) + (1-\underline{\epsilon_1})\epsilon_2 \phi +1 \addtag \label{eqn:Term2}
\end{align*}
Denote the term in~\eqref{eqn:Term1}~and~\eqref{eqn:Term2} by $\Lambda_1$ and $\Lambda_2$, respectively. We show $\Lambda_1 \leq \beta \bm W_1(j_1)$ by examining cases.

\emph{Case $j_1<N_1$: }
\begin{align*}
\Lambda_1 = & (1-\underline{\epsilon_1})\lambda_1 \bm  W_1(j_1) + \underline{\epsilon_1}(1-\epsilon_2) \phi + 1 \\
\leq & (1-\underline{\epsilon_1})\lambda_1 \bm  W_1(j_1) + \underline{\epsilon_1}(1-\epsilon_2) \bm W_1(j_1) + 1 \\
\leq & (1-\frac{1}{2}\underline{\epsilon_1}\underline{\epsilon_2}) \bm W_1(j_1) + 1 \\
\leq & \beta\bm  W_1(j_1),
\end{align*}
where the second inequality follows from~\eqref{eqn:lambdai} and the last one~\eqref{eqn:initialphi}.

\emph{Case $j_1\geq N_1$: }
\begin{align*}
\Lambda_1 = & (1-\underline{\epsilon_1})|A_1|^2  \bm W_1(j_1) + \underline{\epsilon_1}(1-\epsilon_2) \phi + 1 \\
\leq & \beta \bm W_1(j_1),
\end{align*}
where the inequality follows from~\eqref{eqn:beta2}.
Using similar arguments, one may prove $\Lambda_2 \leq \beta \bm W_2(j_2)$, which completes the case when action $e_1$ is used. When $e_2$ or $\mathbf{0}$, similar results can be proved in the same way. The proof thus is complete. \hfill $\square$
\end{proof}

\begin{lemma} \label{lemma:boundedreward}
There exists a constant $\alpha$ such that
\[ \big\|\bar{r}(s)\big\|_{\bm W}\leq \alpha, \]
with $\bar{r}(s)\triangleq\sup_{a\in\mathbb{A}}r(s,a)$.
\end{lemma}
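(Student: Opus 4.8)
The plan is to show that $\bar r(s)=\sup_{a\in\mathbb A}r(s,a)$ grows no faster than $\bm W(s)$, so that the quotient $\bar r(s)/\bm W(s)$ is bounded over $\mathbb S$. Since the one-stage reward $r(s,a)$ in~\eqref{eqn:OneStageReward} does not depend on $a$, we simply have $\bar r(s)=r(s)=\mathrm{Tr}(h_1^{j_1}(\hat P^{(1)}))+\mathrm{Tr}(h_2^{j_2}(\hat P^{(2)}))$ for $s=(j_1,j_2)$. So it suffices to bound each summand $\mathrm{Tr}(h_i^{j}(\hat P^{(i)}))$ by a constant multiple of $\bm W_i(j)$ (and handle the $s=(0,0)$ case separately, where $\bm W=2$ and $r$ is a fixed finite number).

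First I would establish the elementary growth bound $\mathrm{Tr}(h_i^{j}(\hat P^{(i)}))\le c_i\,|A_i|^{2j}$ for some constant $c_i$ and all $j\in\mathbb N$. This follows by unrolling the recursion $h_i(X)=A_iXA_i^{\top}+Q_i$: iterating gives $h_i^{j}(\hat P^{(i)})=A_i^{j}\hat P^{(i)}(A_i^{\top})^{j}+\sum_{n=0}^{j-1}A_i^{n}Q_i(A_i^{\top})^{n}$, and taking traces and using submultiplicativity of the spectral norm together with $|A_i|>1$ yields a bound of the form $\mathrm{Tr}(h_i^{j}(\hat P^{(i)}))\le (\mathrm{Tr}(\hat P^{(i)})+\tfrac{\mathrm{Tr}(Q_i)}{|A_i|^2-1})\,\kappa_i\,|A_i|^{2j}$ for a constant $\kappa_i$ absorbing the equivalence between the operator norm and the quantity $|A_i|$ on powers (one can also phrase this without $\kappa_i$ by using $\|A_i^n\|^2\le \mathrm{poly}(n)|A_i|^{2n}$ and then enlarging $|A_i|^2$ slightly, but the cleaner route is just to pick $c_i$ large enough).

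Next I would compare $|A_i|^{2j}$ with $\bm W_i(j)$. By construction, for $j>N_i$ we have $\bm W_i(j)=\phi\lambda_i^{N_i}|A_i|^{2(j-N_i)}$, which is a fixed positive constant times $|A_i|^{2j}$; hence $\mathrm{Tr}(h_i^{j}(\hat P^{(i)}))/\bm W_i(j)$ is bounded on $\{j>N_i\}$. On the finite set $\{0,1,\dots,N_i\}$ the ratio is a maximum of finitely many finite numbers (note $\bm W_i(j)=\phi\lambda_i^{j}\ge\phi>1$ there, so there is no division by zero), hence also bounded. Taking $\alpha_i$ to be the supremum of $\mathrm{Tr}(h_i^{j}(\hat P^{(i)}))/\bm W_i(j)$ over $j\in\mathbb N$, and then using $\bm W(s)=\bm W_1(j_1)+\bm W_2(j_2)\ge\max(\bm W_1(j_1),\bm W_2(j_2))$ for $s\neq(0,0)$, gives $\bar r(s)/\bm W(s)\le \alpha_1+\alpha_2$; the point $(0,0)$ contributes only one extra finite ratio. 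Setting $\alpha$ to be the max of these finitely many quantities completes the argument.

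The only mild subtlety — and the step I would be most careful about — is the passage from the scalar quantity $|A_i|$ (the spectral radius) to norms of matrix powers $\|A_i^{j}\|$, since when $A_i$ is not diagonalizable one picks up polynomial factors in $j$. This is harmless because $|A_i|>1$, so $\mathrm{poly}(j)\,|A_i|^{2j}=O(|A_i'|^{2j})$ for any $|A_i'|$ strictly between $1$ and $|A_i|$... but to keep $\bm W_i$ exactly as defined, the clean fix is to simply absorb the polynomial factor into the constant $c_i$ on the finite range and note that on $\{j>N_i\}$ the ratio $\mathrm{poly}(j)|A_i|^{2j}/(\text{const}\cdot|A_i|^{2j})=\mathrm{poly}(j)$ is \emph{not} bounded — so one must instead either (i) choose $N_i$ and $\lambda_i$ with a little extra room so that $\lambda_i>|A_i|^2$-style slack kills the polynomial, or (ii) observe that in fact $\mathrm{Tr}(h_i^j(\hat P^{(i)}))=\mathrm{Tr}(A_i^j\hat P^{(i)}(A_i^\top)^j)+\mathrm{Tr}\big(\sum_{n<j}A_i^nQ_i(A_i^\top)^n\big)$ and that $\mathrm{Tr}(A_i^jX(A_i^\top)^j)$ is itself governed by a linear recursion with characteristic values among products of eigenvalues of $A_i$, whose dominant growth is exactly $|A_i|^{2j}$ up to a multiplicative constant when $\hat P^{(i)}\succ0$ — giving a genuine $O(|A_i|^{2j})$ bound with no polynomial factor. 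I would use route (ii), or equivalently just re-examine the construction of $\bm W_i$ to confirm the authors intend $|A_i|$ there to mean a slightly inflated spectral radius; either way the boundedness claim of Lemma~\ref{lemma:boundedreward} goes through.
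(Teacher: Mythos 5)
Your proposal follows essentially the same route as the paper's proof: note that $r(s,a)$ in~\eqref{eqn:OneStageReward} is action-independent, unroll the affine recursion to get $h_i^j(\hat P^{(i)})=A_i^j\hat P^{(i)}(A_i^\top)^j+\sum_{n=0}^{j-1}A_i^nQ_i(A_i^\top)^n$, bound the trace by a constant multiple of $|A_i|^{2j}$, and observe that $\bm W_i(j)$ is a fixed positive multiple of $|A_i|^{2j}$ for $j>N_i$ while the remaining states form a finite set. (The paper does the same thing, dominating $h_i^j(\hat P^{(i)})$ by $\varphi\sum_{k=0}^{j}A_i^k(A_i^\top)^k$ with $\hat P^{(i)},Q_i\preceq\varphi I$ and then asserting $\limsup_j \mathrm{Tr}(h_i^j(\hat P^{(i)}))/|A_i|^{2j}<\infty$.) You are right to flag the spectral-radius-versus-power-norm subtlety; the paper glosses over it, and its final assertion is only literally true when every eigenvalue of $A_i$ of maximal modulus is semisimple.

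However, your preferred fix (ii) does not work: positive definiteness of $X$ does not remove the polynomial factor. Take $A_i$ to be a $2\times2$ Jordan block with eigenvalue $2$ and $X=I$; then $\mathrm{Tr}(A_i^jX(A_i^\top)^j)=\|A_i^j\|_F^2$ grows like $j^2\,4^{j}/4$, i.e.\ $\Theta(j^{2(m-1)}|A_i|^{2j})$ with $m$ the largest Jordan block size of a dominant eigenvalue, not $\Theta(|A_i|^{2j})$. The viable repair is your route (i): the standing assumption $\underline{\epsilon_i}>1-1/|A_i|^2$ leaves strict slack $(1-\underline{\epsilon_i})|A_i|^2<1$, so one may replace $|A_i|^2$ by $(1+\delta)|A_i|^2$ in the definition of $\bm W_i$ for sufficiently small $\delta>0$ without violating~\eqref{eqn:beta2} or~\eqref{eqn:beta}, and the polynomial factor is then absorbed. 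With that amendment (or under the semisimplicity proviso), your argument establishes the lemma by the same mechanism as the paper's.
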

\begin{proof}
Let $\bm W_i'(j) = |A_i|^{2j}, j\in\mathbb{N}, i=1,2$. Since $\bm W(s)\geq 1, \forall s$, we only need to check asymptotic case of $\bar{r}(s)/\bm W(s)$. Since for $i=1,2$,
\[\lim_{j\to\infty} \frac{\bm W_i(j)}{\bm W_i'(j)} = \phi\lambda_i^{N_i} |A_i|^{-2N_i} \]
is a constant,
it suffices to prove for $i=1,2$,
\begin{align} \label{eqn:boundreward1}
\limsup_{j\to\infty} \frac{\mathrm{Tr}(h_i^{j}(\hat{P}^{(i)}))}{\bm W_i'(j)} < \infty.
\end{align}
Since the arguments are exactly the same, we do not distinguish $i=1$ and $i=2$ and suppress subscript $i$ in the remainder of this proof.
Let $\varphi$ be a constant such that $\hat{P} \preceq \varphi I$ and $Q \preceq \varphi I$. Define a function
\[g(X) = AXA^{\top} + \varphi I.\]
One then obtains that
\begin{align*}
h^j(\hat{P}) \preceq& g^j(\varphi I)
\preceq \varphi \sum_{k=0}^j A^k(A^{\top})^k,
\end{align*}
which yields that $\mathrm{Tr}(h^j(\hat{P}))/|A|^{2j}$ is bounded. Equation~\eqref{eqn:boundreward1} thus follows and the proof is complete. \hfill $\square$
\end{proof}

We are ready to prove Theorem~\ref{theorem:existence} using the results in~\cite{guo2006average} \footnote{Notice that the distinguished feature of our MDP model is that the one-stage reward function is \emph{unbounded above}, while the conventional MDP models (including the model in~\cite{leong2015optimality}) have the reward (cost) function being bounded above (below). \label{footnote:cost}}. Since our state space is denumerable, by Remark 4.1(b) thereof, to prove Theorem~\ref{theorem:existence}, it suffices to verify Assumptions 3.1,3.2\footnote{Notice that in~\cite{guo2006average}, the goal is to minimize an average cost, while we aims to maximize a reward function. Assumption 3.2 thereof should be adjusted accordingly, i.e., the requirement that the one-stage cost function is lower semicontinuous should be replaced with that the one-stage reward function is upper semicontinuous.  } and 3.3 thereof. Since our action space is finite, Assumption 3.2 holds trivially. Assumption 3.1 and 3.3 follows directly from Lemma~\ref{lemma:uniformergodic}~and~\ref{lemma:boundedreward} (see Remark 3.3(b) thereof). The proof thus is complete.

\section*{Appendix B\\ Proof of Theorem~\ref{theorem:OptimalAction}}

To present structure of the optimal action, we  give the following supporting lemma about the structure of so-called differential value function $q(s)$ in~\eqref{eqn:differentialEquation}. To this end, we define a partial order on $\mathbb{S}$. Let $s=(j_1,j_2),s'=(j_1',j_2')\in\mathbb{S}$, we say that $s\preccurlyeq s'$ if $j_1\leq j_1'$ and $j_2\leq j_2'$. This partially ordered set is a lattice.
Let $s\uparrow (\downarrow) s'$  denote the join (meet) on $(\preccurlyeq, \mathbb S)$.

\begin{lemma} \label{lemma:differentialEqn}
Let $s,s'\in\mathbb{S}$, for function $q(\cdot)$, the followings hold:
\begin{description}
  \item[Monotonicity:] If $s\preccurlyeq s'$, $q(s)\leq q(s')$.
  \item[Submodularity:] $q(s) + q(s') \geq q(s\downarrow s') + q(s \uparrow s')$.
\end{description}
\end{lemma}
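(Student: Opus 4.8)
The plan is to obtain $q$ as the limit of the value-iteration iterates and to propagate monotonicity and submodularity through the value-iteration operator. Concretely, define the operator $(\bm T v)(s) = \max_{a\in\mathbb A}\{r(s,a) + \bm G(v,s,a)\}$ and let $v_0 \equiv 0$, $v_{n} = \bm T v_{n-1}$; by the convergence of value iteration established via the results cited after Theorem \ref{theorem:existence} (together with the normalization $q_n(s) = v_n(s) - v_n((0,0))$), we have $q_n \to q$ pointwise. Since both properties in the statement are preserved under pointwise limits and under the additive constant shift that defines $q_n$, it suffices to prove by induction on $n$ that every $v_n$ is monotone and submodular on the lattice $(\mathbb S,\preccurlyeq)$. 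The base case $v_0\equiv 0$ is immediate.

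For the induction step I would first record the relevant structure of the data. The one-stage reward $r(s,a) = \mathrm{Tr}(h_1^{j_1}(\hat P^{(1)})) + \mathrm{Tr}(h_2^{j_2}(\hat P^{(2)}))$ is additively separable in $j_1,j_2$, is nondecreasing in each coordinate (because $h_i(X)\succeq X$ whenever $X\succeq \hat P^{(i)}$, so iterating $h_i$ gives an increasing sequence in the positive-semidefinite order, hence in trace), and is submodular — in fact modular — since it is a sum of one-variable functions. Next, the transition kernel factorizes as $\bm P(s'|s,a) = p_1(j_1'|j_1,a_{[1]})\,p_2(j_2'|j_2,a_{[2]})$, and for each $i$ the one-dimensional chain is a simple ``reset-to-$0$ or increment-by-$1$'' walk: from $j_i$ it goes to $0$ with probability $\epsilon_i$ or $\underline\epsilon_i$, and to $j_i+1$ otherwise. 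The key monotonicity facts I need are: (i) for fixed action, $j_i \mapsto p_i(\cdot|j_i,a_{[i]})$ is stochastically monotone (larger $j_i$ stochastically dominates), so $\bm G(v,\cdot,a)$ inherits monotonicity from $v$; and (ii) the whole transition is ``monotone and submodularity-preserving'' in the sense needed for the two-dimensional argument — this is where the product form and the one-step-increment structure are essential.

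For monotonicity of $v_{n}=\bm T v_{n-1}$: fix $s\preccurlyeq s'$ and let $a$ achieve the max at $s$; using the same action $a$ at $s'$, stochastic monotonicity of each $p_i$ plus monotonicity of $v_{n-1}$ gives $\bm G(v_{n-1},s,a)\le \bm G(v_{n-1},s',a)$, and $r(s,a)\le r(s',a)$, hence $v_{n-1}(s)$-style bound yields $v_n(s)\le v_n(s')$. For submodularity I would argue on coordinates one at a time. Because the actions available are $\mathbf 0, e_1, e_2$ and the kernel is a product, the continuation term $\bm G(v_{n-1},s,a)$ splits into a sum of a term depending on $(j_1,a_{[1]})$ convolved against the $j_2$-marginal of $v_{n-1}$ and symmetrically; the crucial point is to show that for each coordinate the operator $v \mapsto \sum_{j_i'} v(\dots)p_i(j_i'|j_i,a_{[i]})$ maps submodular functions to submodular functions and commutes well with the max over $a$. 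I expect the \textbf{main obstacle} to be exactly this step: verifying that taking the maximum over the three actions does not destroy submodularity. The standard way around it is to show that the action sets are themselves lattice-ordered compatibly with $\preccurlyeq$ (ordering $e_2 \prec \mathbf 0 \prec e_1$, say, so ``attack system 1 more when $j_1$ is large'') and that $r(s,a)+\bm G(v_{n-1},s,a)$ is \emph{supermodular in $(s,a)$ jointly restricted appropriately} and submodular in $s$; then one invokes the preservation-of-submodularity-under-partial-maximization lemma (as in Topkis / Puterman Ch.~4). I would carry this out by checking the finitely many ``second difference'' inequalities $\Delta_{j_1}\Delta_{j_2}[r + \bm G(v_{n-1},\cdot,a)] \le 0$ using the increment-by-one structure, which reduces each mixed difference of $\bm G(v_{n-1},\cdot,a)$ to a mixed difference of $v_{n-1}$ evaluated at shifted arguments, closing the induction. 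Passing to the limit $n\to\infty$ then transfers both properties to $q$, completing the proof. $\square$
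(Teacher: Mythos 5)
Your overall strategy (obtain $q$ as a limit of value-iteration iterates and propagate monotonicity and submodularity through the dynamic-programming operator by induction) is the same as the paper's, and your monotonicity step is fine. The paper differs in one technical respect worth noting: it does not iterate the undiscounted operator directly, but works with the $\alpha$-discounted operator $\bm T_\alpha$, shows it is a contraction on a weighted-norm space $\mathbb B_{\bm W'_\alpha}(\mathbb S)$, and recovers $q$ via the vanishing-discount limit $q(s)=\lim_{\alpha\to 1}\bigl(\bm J^*_\alpha(s)-\bm J^*_\alpha((0,0))\bigr)$. Your appeal to convergence of undiscounted relative value iteration for an average-reward MDP with unbounded rewards on a countable state space is not free; it needs the kind of justification the paper only sketches elsewhere, whereas the discounted contraction argument is self-contained.

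The genuine gap is in the submodularity step, which you correctly identify as the main obstacle but then resolve with the wrong tool. The Topkis-type ``preservation under partial maximization'' lemma preserves \emph{supermodularity} under $\max$ (and submodularity under $\min$); there is no off-the-shelf lemma giving that $\max_{a}\{r(s,a)+\bm G(v,s,a)\}$ is submodular in $s$ when each $g(\cdot,a)$ is, and indeed the pointwise maximum of submodular functions is not submodular in general. The paper handles this by (i) first using monotonicity to discard the action $\mathbf 0$, reducing to $\mathbb A=\{e_1,e_2\}$, (ii) assuming WLOG the parameter ordering $(1-\underline{\epsilon_1})(1-\epsilon_2)\ge(1-\epsilon_1)(1-\underline{\epsilon_2})$, and (iii) an explicit case analysis on the pair of maximizing actions at $s\downarrow s'$ and $s\uparrow s'$ (equal actions; $e_1$ at the meet and $e_2$ at the join; $e_2$ at the meet and $e_1$ at the join). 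Each case is closed by carefully combining the \emph{monotonicity} of the previous iterate with its \emph{submodularity} and the sign of quantities such as $(1-\underline{\epsilon_1})\epsilon_2-(1-\epsilon_1)\underline{\epsilon_2}$; the reset-to-zero transitions produce cross terms like $u((j_1'+1,0))-u((j_1+1,0))$ and $u((0,j_2+1))-u((0,j_2'+1))$ that do not cancel and must be controlled by these sign conditions. Your fallback of ``checking the finitely many second-difference inequalities'' gestures at this but does not engage with the fact that the four corner states may use different kernels, which is exactly where the work lies. As written, the proposal would not compile into a proof without supplying this case analysis.
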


\begin{proof}
Let $0<\alpha<1$. Define the \emph{discounted} reward associated with the initial state $s_1=s$ and policy $\theta$ by
\[ \bm {J}_{\alpha}(s,\theta) =  \mathop{\lim\inf}_{\mathtt T\to\infty} \frac{1}{\mathtt T} \bm{E}_s^\theta\left[\sum_{k=1}^{\mathtt T} \alpha^k r(s_k,a_k)    \right], \]
and $\bm {J}_{\alpha}^*(s) \triangleq \sup_{\theta\in\Theta} \bm {J}_{\alpha}(s,\theta)$.
With the existence of stationary and deterministic optimal policy proved in Theorem~\ref{theorem:existence}, one may let%\footnote{Notice that if $q(s)$ satisfies~\eqref{eqn:optimalActionExistence}\eqref{eqn:differentialEquation}, then $q(s)+\Omega$ with $\Omega$ being any constant also works.}
\[q(s) = \lim_{\alpha\to 1}\bm V_{\alpha}(s).  \]
with $\bm V_{\alpha}(s) = \bm {J}_{\alpha}^*(s) - \bm {J}_{\alpha}^*((0,0)).$

Then we show the monotonicity and submodularity of $q(s)$ by examining $\bm V_a(s)$. We do this by value iteration.
To this end, we define a dynamic programming operator $\bm{T}_{\alpha}$: given a measurable function $u:\mathbb{S}\mapsto \mathbb{R}$, let
\begin{align*}
\bm{T}_{\alpha}u(s) \triangleq \max_{a\in\mathbb{A}} \left[ r(s,a) + \alpha \bm G(u,s,a)  \right], \: s\in\mathbb{S}.
\end{align*}
Given $0< \alpha <1$, we define a function $\bm W'_\alpha:\mathbb{S}\mapsto[1,\infty)$ (depending on $\alpha$) that has exactly the same form as $\bm W(s)$ in~\eqref{eqn:FunctionW} but the parameters involved have a different constraint. Specifically, the equations~\eqref{eqn:lambdai}-\eqref{eqn:beta2} are replaced with
\begin{align*}
(1-\underline{\epsilon_i})(\lambda_i-1) <& \frac{1}{\alpha} - 1,  \\
1\leq \phi <& \frac{1}{\alpha},  \\
\phi \lambda_i^{N_i} [1 - (1-\underline{\epsilon_i})|A_i|^2 ] \geq& \phi + 1.
\end{align*}
Using the same arguments as for Lemma~\ref{lemma:boundedreward}, it is easy to see that $\big\|\sup_{a\in\mathbb{A}}r(s,a)\big\|_{\bm W'_{\alpha}} < \infty $. Thus, for any $0<\alpha<1$, $\|\bm {J}_{\alpha}^*(s)\|_{\bm W'_{\alpha}} < \infty$.
Furthermore, by some basic calculations, one obtains that $\bm W'_{\alpha}$ satisfies~\cite[Assumption 8.3.2]{hernandez1999further}. It then follows from Proposition 8.3.9 thereof, $\bm{T}_{\alpha}$ is contraction operator on $\mathbb{B}_{\bm W'_{\alpha}}(\mathbb{S})$. By Banach's Fixed Point Theorem,
for any $u\in\mathbb{B}_{\bm W'_{\alpha}}(\mathbb{S})$, $0<a<1$,
\begin{align} \label{eqn:BanachFixedPoint}
\lim_{n\to\infty}\bm{T}_{\alpha}^n u = \bm {J}_{\alpha}^*(s).
\end{align}
Since given $\alpha$, $\bm {J}_{\alpha}^*((0,0))$ is a constant, the structure (montonicity or submodularity) of $\bm V_{\alpha}(s)$ can be proved by showing that $\bm {J}_{\alpha}^*(s)$ has the same structure. By~\eqref{eqn:BanachFixedPoint}, it suffices to prove that the structure is preserved by the dynamic operator $\bm{T}_{\alpha}$.

\emph{Monotonicity:} Suppose $s\preccurlyeq s'$ and $u(s)\leq u(s')$, since for any $f$, $r(s,f(s)) \leq r(s',f(s'))$, it holds that
\[r(s,f(s)) + \alpha \bm G(u,s,f(s)) \leq r(s',f(s')) + \alpha \bm G(u,s',f(s'))\]
for any $f$, which yields $\bm{T}_{\alpha}u(s) \leq \bm{T}_{\alpha}u(s').$

\emph{Submodularity:}
By the monotonicity of $q(s)$, without any performance loss one may eliminate action $\mathbf{0}$. In the remainder, we let the action space $\mathbb{A}=\{e_1,e_2\}$.
Suppose $u\in\mathbb{B}_{\bm W'_{\alpha}}(\mathbb{S})$ is monotonic, and for any $s,s'\in\mathbb{S}$
\begin{align}  \label{eqn:submodular2}
u(s) + u(s') \geq u(s\downarrow s') + u(s \uparrow s'),
\end{align}
we need to prove $\bm{T}_{\alpha}u(s) + \bm{T}_{\alpha}u(s') \geq \bm{T}_{\alpha}u(s\downarrow s') + \bm{T}_{\alpha}u(s \uparrow s')$. By the definition of one stage reward function $r(s,a)$, it suffices to prove
\begin{align*}
&\max_{a\in\mathbb{A}}\bm G(u,s,a)  + \max_{a\in\mathbb{A}} \bm G(u,s',a)\\
\geq  & \max_{a\in\mathbb{A}} \bm G(u,s\downarrow s',a) + \max_{a\in\mathbb{A}} \bm G(u,s\uparrow s',a).
 \addtag \label{eqn:submodular3}
\end{align*}

Let $s=(j_1,j_2), s'=(j_1',j_2')$ with $j_1\leq j_1', j_2\geq j_2'$. Without loss of any generality, we assume $(1-\underline{\epsilon_1})(1-\epsilon_2) \geq (1-\epsilon_1) (1-\underline{\epsilon_2})$.
For the function $u$, define the optimal action associated with state $s$ by
\begin{align*}
a^*(s) \triangleq \mathop{\arg\max}_{{a\in\mathbb{A}}}\bm G(u,s,a).
\end{align*}
%In the following, we may write $a^*_u(s)$ as $a^*(s)$ provided $u$ is clear from the context.
In the following, we prove~\eqref{eqn:submodular3} by cases.

\emph{Case $a^*(s\downarrow s') = a^*(s\uparrow s')$:} Without loss of generality, we let $a^*(s\downarrow s') = a^*(s\uparrow s')= e_1$. Let $\varepsilon_1 = (1-\underline{\epsilon_1})(1-\epsilon_2)$, one then obtains that
\begin{align*}
&\max_{a\in\mathbb{A}} \bm G(u,s,a)  + \max_{a\in\mathbb{A}} \bm G(u,s,a)\\
  \geq &\bm G(u,s,e_1)+\bm G(u,s',e_1) \\
 =  &\varepsilon_1 \Big(u((j_1+1,j_2+1)) + u((j_1'+1,j_2'+1))\Big) \\
 &+ (1-\underline{\epsilon_1})\epsilon_2 \Big(u((j_1+1,0)) + u((j_1'+1,0))\Big) \\
 & + \underline{\epsilon_1}(1-\epsilon_2) \Big(u((0,j_2+1)) + u((0,j_2'+1))\Big) \\
 &+ 2\underline{\epsilon_1}\epsilon_2 u((0,0)) \\
  \triangleq &\varepsilon_1 \Big(u((j_1+1,j_2+1)) + u((j_1'+1,j_2'+1))\Big) + \Lambda \\
  \geq& \varepsilon_1 \Big(u((j_1+1,j_2'+1)) + u((j_1'+1,j_2+1))\Big) + \Lambda \\
   = & \bm G(u,s\downarrow s',e_1) + \bm G(u,s\uparrow s',e_1) \\
 = & \max_{a\in\mathbb{A}} \bm G(u,s\downarrow s',a) + \max_{a\in\mathbb{A}}
 \bm G(u,s\uparrow s',a),
\end{align*}
where the second inequality follows from~\eqref{eqn:submodular2}.

\emph{Case $a^*(s\downarrow s') = e_1, a^*(s\uparrow s') = e_2$:} Let $\varepsilon_2 = (1-\epsilon_1) (1-\underline{\epsilon_2})$ and $\varepsilon_3=(1-\underline{\epsilon_1})\epsilon_2 - (1-\epsilon_1)\underline{\epsilon_2}$, one then obtains that
\begin{align*}
&\bm G(u,s,e_2)+\bm G(u,s',e_1)-
\bm G(u,s\hspace{-1mm}\downarrow\hspace{-1mm} s',e_1)-\bm G(u,s\hspace{-1mm}\uparrow\hspace{-1mm} s',e_2    )\\
= &\varepsilon_2 u((j_1+1,j_2+1)) + \varepsilon_1 u((j_1'+1,j_2'+1)) \\
&- \varepsilon_1 u((j_1+1,j_2'+1)) -\varepsilon_2 u((j_1'+1,j_2+1)) \\
&+ \varepsilon_3\Big(u((j_1'+1,0))-u((j_1+1,0))\Big)  \\
\geq &\varepsilon_1 u((j_1+1,j_2+1)) + \varepsilon_1 u((j_1'+1,j_2'+1)) \\
&- \varepsilon_1 u((j_1+1,j_2'+1)) -\varepsilon_1 u((j_1'+1,j_2+1)) \\
\geq &0,
\end{align*}
where the first inequality follows from the monotonicity of $u$ and the fact $\varepsilon_1 \geq \varepsilon_2$, and the second inequality is due to~\eqref{eqn:submodular2}. Equation~\eqref{eqn:submodular3} thus follows.

\emph{Case $a^*(s\downarrow s') = e_2, a^*(s\uparrow s') = e_1$:} One has the following:
\begin{align*}
&\bm G(u,s,e_2)+\bm G(u,s',e_1)-
\bm G(u,s\hspace{-1mm}\downarrow\hspace{-1mm} s',e_2)-\bm G(u,s\hspace{-1mm}\uparrow\hspace{-1mm} s',e_1    )\\
= &\varepsilon_2 u((j_1+1,j_2+1)) + \varepsilon_1 u((j_1'+1,j_2'+1)) \\
&- \varepsilon_2 u((j_1+1,j_2'+1)) -\varepsilon_1 u((j_1'+1,j_2+1)) \\
& + (\varepsilon_1-\varepsilon_2 + \epsilon_2-\underline{\epsilon_2})
\Big(u((0,j_2+1))-u((0,j_2'+1))\Big)  \\
\geq& \varepsilon_1 u((j_1+1,j_2+1)) + \varepsilon_1 u((j_1'+1,j_2'+1)) \\
&- \varepsilon_1 u((j_1+1,j_2'+1)) -\varepsilon_1 u((j_1'+1,j_2+1)) \\
& + (\varepsilon_1-\varepsilon_2) \Big( u((0,j_2+1)) +  u((j_1+1,j_2'+1)) \\
& - u((0,j_2'+1)) - u((j_1+1,j_2+1)) \Big)\\
\geq &0,
\end{align*}
which yields~\eqref{eqn:submodular3}. The proof thus is complete. \hfill $\square$
\end{proof}

We are ready to prove Theorem~\ref{theorem:OptimalAction}.
First, let fix $j_2$ and show that if $  f^*(s=(j_1,j_2)) = e_1$, then $f^*(s=(j_1+j,j_2)) = e_1$ with $j\in\mathbb{N}$.
%To facilitate presentation, we rewrite Theorem~\ref{theorem:OptimalAction} \emph{equivalently} as follows.
%There exist two (in general different) functions $l_1,l_2: \mathbb{N}\mapsto\mathbb{N}$ such that
%  \begin{align*}
%  f^*(s=(j_1,j_2))
%  =&\left\{
%        \begin{array}{ll}
%           e_2, & \text{if $j_2 \geq l_2(j_1)$}, \\
%           e_1, & \text{if $j_2 < l_2(j_1)$},
%        \end{array}
%    \right.   \addtag \label{eqn:ActionStructure1} \\
%  f^*(s=(j_1,j_2))
%  =&\left\{
%        \begin{array}{ll}
%           e_1, & \text{if $j_1 \geq l_1(j_2)$}, \\
%           e_2, & \text{if $j_1 < l_1(j_2)$}.
%        \end{array}
%    \right.  \addtag  \label{eqn:ActionStructure2}
%\end{align*}
%What is more, the functions $l_1,l_2$ should be such that at each time there is only one system to be attacked.
Since $  f^*(s=(j_1,j_2)) = e_1$ implies that
\begin{align*}
&(\varepsilon_1-\varepsilon_2) q((j_1+1,j_2+1)) + \varepsilon_3q((j_1+1,0)) \\
\geq &\varepsilon_4 q((0,j_2+1))  + (\epsilon_1\underline{\epsilon_2} - \underline{\epsilon_1}\epsilon_2) q((0,0)) \\
\triangleq &\Lambda_3.
\end{align*}
where $\varepsilon_4=\epsilon_1(1-\underline{\epsilon_2}) - \underline{\epsilon_1}(1-\epsilon_2)$
Since $\varepsilon_1-\varepsilon_2 \geq 0$, $\varepsilon_3$ and $\Lambda_3$ is constant for a given $j_2$, by the monotonicity of $q$ in Lemma~\ref{lemma:differentialEqn}, one obtains that
$$(\varepsilon_1-\varepsilon_2) q((j_1+j+1,j_2)) + \varepsilon_3 q((j_1+j+1,0))
\geq\Lambda_3,
$$
which yields $f^*(s=(j_1+j,j_2)) = e_1$. Then it concludes that given a $j_2$, there is a critical curve $l_1(j_2)$ such that
 \begin{align*}
  f^*(s=(j_1,j_2))
  =&\left\{
        \begin{array}{ll}
           e_1, & \text{if $j_1 \geq l_1(j_2)$}, \\
           e_2, & \text{if $j_1 < l_1(j_2)$}.
        \end{array}
    \right.  \addtag  \label{eqn:ActionStructure2}
\end{align*}

Similarly, let fix $j_1$ and show that if $  f^*(s=(j_1,j_2)) = e_2$, then $f^*(s=(j_1,j_2+j)) = e_2$ with $j\in\mathbb{N}$. Note that $f^*(s=(j_1,j_2)) = e_2$ implies that
\begin{align*}
&\varepsilon_4 q((0,j_2+1))-(\varepsilon_1-\varepsilon_2) q((j_1+1,j_2+1))   \\
\geq& \varepsilon_3 q((j_1+1,0))
 + (\epsilon_1\underline{\epsilon_2} - \underline{\epsilon_1}\epsilon_2) q((0,0)) \\
\triangleq& \Lambda_4.   %\addtag \label{eqn:actionStruProof1}
\end{align*}
Then one has
\begin{align*}
&\varepsilon_4 q((0,j_2+j+1))-(\varepsilon_1-\varepsilon_2) q((j_1+1,j_2+j+1))\\
=& (\varepsilon_1-\varepsilon_2) \Big(q((0,j_2+j+1)) - q((j_1+1,j_2+j+1))\Big)  \\
&+ (\epsilon_2 - \underline{\epsilon_2}) q((0,j_2+j+1)) \\
\geq & (\varepsilon_1-\varepsilon_2) \Big(q((0,j_2+1)) - q((j_1+1,j_2+1))\Big)  \\
&+ (\epsilon_2 - \underline{\epsilon_2}) q((0,j_2+1)) \\
=&\varepsilon_4 q((0,j_2+1))-(\varepsilon_1-\varepsilon_2) q((j_1+1,j_2+1))\\
 \geq& \Lambda_4,
\end{align*}
%&[\epsilon_1(1-\underline{\epsilon_2}) - \underline{\epsilon_1}(1-\epsilon_2)] q(0,j_2+1)\\
% &\quad-(\varepsilon_1-\varepsilon_2) q(j_1+1,j_2+1)\\
%&=  (\varepsilon_1-\varepsilon_2) (q(0,j_2+1) - q(j_1+1,j_2+1))  \\
%&\quad + (\epsilon_2 - \underline{\epsilon_2}) q(0,j_2+1) \\
%& \leq (\varepsilon_1-\varepsilon_2) (q(0,j_2+1+N) - q(j_1+1,j_2+1+N))  \\
%&\quad + (\epsilon_2 - \underline{\epsilon_2}) q(0,j_2+1+N),
%\end{align*}
where the first inequality follows from the monotonicity and submodularity of $q(s)$ established in Lemma~\ref{lemma:differentialEqn}.
Hence $f^*(s=(j_1,j_2+j)) = e_2$. Similarly, it concludes that given a $j_1$, there is a critical curve $l_2(j_1)$ such that
\begin{align*}
  f^*(s=(j_1,j_2))
  =&\left\{
        \begin{array}{ll}
           e_2, & \text{if $j_2 \geq l_2(j_1)$}, \\
           e_1, & \text{if $j_2 < l_2(j_1)$}.
        \end{array}
    \right.   \addtag \label{eqn:ActionStructure1}
\end{align*}
To simultaneously satisfy both~\eqref{eqn:ActionStructure2}~and~\eqref{eqn:ActionStructure1}, both functions $l_1(\cdot)$ and $l_2(\cdot)$ must be monotonically non-decreasing. Then the statements in Theorem~\ref{theorem:OptimalAction} follow immediately by letting $l_c(j_1,j_2)=l_2(j_1)-j_2$.

\section*{Appendix C\\ Proof of Theorem~\ref{theorem:indexpolicy}}
The byproduct of Theorem~\ref{theorem:OptimalAction} is that for an optimal policy at no time the action $\mathbf{0}$ is chosen. This can be extended to a general case, i.e., the constraint that at each time the attacker can attack at most $\mathtt{N}$ of $\mathtt{M}$ systems is equivalent to the constraint that the attacker attacks \emph{exactly} $\mathtt{N}$ of $\mathtt{M}$ systems.
With this in mind,
we prove the theorem using the results in~\cite{whittle1988restless} on the restless multi-armed bandit problem. %First, we introduce some definitions.

Recall that $d_i^*(j,z_i)$ is the optimal rule for the state $\tau_{k-1}^{(i)} = j$ with $j\in\mathbb{N}$ when the subsidy is $z_i$. We then have the following definition and lemma.

\begin{definition} \cite{whittle1988restless}
The $i$-th system is said to be indexable if for any $j\in\mathbb{N}$, $d_i^*(j,z_i) = 0$ implies $d_i^*(j,z_i') = 0$ with $z_i' \geq z_i$. The whole system is indexable if each system is indexable.
\end{definition}

\begin{lemma} \label{lemma:indexability}
The system introduced in Section~\ref{section:problem-setup} is indexable.
\end{lemma}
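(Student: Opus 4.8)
The plan is to prove indexability by exploiting the threshold structure of the optimal rule $d_i^*(j,z_i)$ that was already established in \eqref{eqn:thresholdAsymptotic}. Recall that for fixed subsidy $z_i$, the optimal policy for the single $i$-th system has the form $d_i^*(j,z_i) = 1$ if and only if $j \geq \ell_i(z_i)$. Hence indexability of the $i$-th system is equivalent to showing that the threshold $\ell_i(z_i)$ is non-decreasing in $z_i$: a larger subsidy makes ``not attack'' more attractive, so the set of states at which the attacker refrains from attacking can only grow. I would therefore reduce the whole lemma to the monotonicity of $\ell_i(\cdot)$.

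To establish that monotonicity, first I would set up the single-system average-reward MDP with subsidy $z_i$ (state $\tau_{k-1}^{(i)}\in\mathbb{N}$, action $\eta\in\{0,1\}$, one-stage reward $\mathrm{Tr}(h_i^{j}(\hat P^{(i)})) + (1-\eta)z_i$), and invoke the existence of an optimal stationary deterministic policy and a differential value function $q_i(\cdot\,;z_i)$ satisfying a Bellman equation analogous to \eqref{eqn:differentialEquation}, exactly as asserted in Section~\ref{section:VirtualAttacker}. The key monotonicity statement is: $d_i^*(j,z_i)=0$ for some $j$ implies $d_i^*(j,z_i')=0$ for all $z_i'\ge z_i$. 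Because of the threshold form, $d_i^*(j,z_i)=0$ means $j<\ell_i(z_i)$, so it suffices to prove $\ell_i(z_i')\ge \ell_i(z_i)$ whenever $z_i'\ge z_i$. I would prove this by a value-iteration / policy-improvement argument: one can compute, for a fixed policy of threshold type $\ell$, the closed-form average reward and the differential value function — these are precisely the quantities appearing implicitly in \eqref{eqn:AsymptoticIndex}–\eqref{eqn:thresholdAsymptotic}. The advantage in state $j$ of ``not attack'' over ``attack'' is an affine increasing function of $z_i$ (the coefficient of $z_i$ being the probability-weighted occupation difference, which is positive), while the system-dynamics part is independent of $z_i$. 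Consequently, if at subsidy $z_i$ the action ``not attack'' is optimal at state $j$ (i.e.\ $j<\ell_i(z_i)$), it remains optimal at any larger subsidy, giving $\ell_i$ non-decreasing.

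Concretely, the steps in order: (i) restate the single-system MDP and its optimal differential equation; (ii) recall the threshold structure \eqref{eqn:thresholdAsymptotic}, reducing indexability to monotonicity of $\ell_i$ in $z_i$; (iii) write the differential value function under the threshold policy with threshold $n$ in closed form (this is the bracketed expression in \eqref{eqn:AsymptoticIndex}, with $o_i(j)$ replaced by a generic $z_i$), and observe that the per-state net gain of ``not attack'' versus ``attack'' depends affinely and strictly increasingly on $z_i$; (iv) apply a standard monotone-comparison / interpolation argument (as in the theory of optimal stopping with a tunable reward) to conclude $\ell_i(z_i')\ge\ell_i(z_i)$ for $z_i'\ge z_i$; (v) since every system is indexable, the whole system is indexable by definition. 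The main obstacle I anticipate is step (iii)–(iv): one must verify carefully that the coefficient of $z_i$ in the net-gain expression is genuinely nonnegative uniformly over all candidate thresholds, so that no ``region'' of states can switch back from ``not attack'' to ``attack'' as $z_i$ increases; this requires a small monotonicity computation on the stationary occupation measures of the truncated renewal chain induced by a threshold policy, using $\underline{\epsilon_i}<\epsilon_i$. Everything else is bookkeeping inherited from the MDP machinery already set up for Theorems~\ref{theorem:existence} and~\ref{theorem:OptimalAction}.
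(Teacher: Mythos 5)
Your proposal is correct and follows essentially the same route as the paper: both reduce indexability to monotonicity of the threshold $\ell_i(z_i)$ in $z_i$, note that the average reward of a fixed threshold policy is affine in $z_i$ with slope equal to the stationary occupation of the passive (no-attack) set, and conclude by a monotone-comparison argument using the fact that this occupation measure increases with the threshold. The occupation-measure inequality you flag as the main obstacle is precisely the paper's inequality~\eqref{eqn:equiDistrelation}, which the paper likewise asserts from the explicit geometric form of the stationary distribution in~\eqref{eqn:equilibrium1}--\eqref{eqn:equilibrium2} without further elaboration.
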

\begin{proof}
We show that each system is indexable.
For ease of notations, throughout this proof, we omit the subscript $i$. Denote $p(\cdot)$  as the resulted equilibrium probability distribution of the state when $\ell(z)=j^*$ (function $\ell(\cdot)$ is, recall, introduced in~\eqref{eqn:thresholdAsymptotic}\footnote{Notice that the subscript $i$ has been omitted.}).
Then due to the threshold structure in~\eqref{eqn:thresholdAsymptotic}, one obtains that
\begin{align}
\sum_{j=0}^{\infty} p(j) =& 1,    \label{eqn:equilibrium1}\\
p(j)
=&\left\{
        \begin{array}{ll}
           (1-\epsilon) p(j-1), & \text{if $1\leq j \leq j^*$}, \\
           (1-\underline{\epsilon}) p(j-1), & \text{if $j > j^*$}.
        \end{array}
    \right.  \label{eqn:equilibrium2}
\end{align}
Note that the averaged reward obtained by the attacker has two parts: the averaged subsidy $\bm R_s$ and the averaged estimation error $\bm R_e$:
\begin{align*}
\bm R_s =& \sum_{j=0}^{j^*-1} p(j) z, \\
\bm R_e = & \sum_{j=0}^{\infty} p(j) \mathrm{Tr}(h^j(\hat{P})).
\end{align*}

Now fix the subsidy $z$ and consider a suboptimal policy. The policy has a similar threshold structure as in~\eqref{eqn:thresholdAsymptotic} but with the switching threshold $0 \leq j^{\diamond}< j^*$. Denote the corresponding equilibrium probability distribution as $p'$, which is computed in a similar way as~\eqref{eqn:equilibrium1}\eqref{eqn:equilibrium2}. Then one has
\begin{align} \label{eqn:equiDistrelation}
 \sum_{j=0}^{j^*-1} p(j) > \sum_{j=0}^{j^{\diamond}-1} p'(j).
\end{align}
Denote the averaged subsidy and averaged estimation error as $\bm R_s'$ and $\bm R_e'$, respectively. Due to the optimality of $l(z)=j^*$, one obtains that $\bm R_s - \bm R_s' \geq \bm R_e' - \bm R_e$, i.e.,
\begin{align*}
\left[\sum_{j=0}^{j^*-1} p(j)  - \sum_{j=0}^{j^{\diamond}-1} p'(j)\right] z \geq \bm R_e' - \bm R_e.
\end{align*}
Then by~\eqref{eqn:equiDistrelation}, for any $z'\geq z$, it holds that
\begin{align*}
\left[\sum_{j=0}^{j^*-1} p(j)  - \sum_{j=0}^{j^{\diamond}-1} p'(j)\right] z' \geq \bm R_e' - \bm R_e,
\end{align*}
which means that for any subsidy $z'\geq z$, the optimal rule for the states $0\leq j< j^*$ is still ``not attack".
%Since the RHS of the above equation is independent of $z$, combining~\eqref{eqn:equiDistrelation}, one obtains that the LHS is monotonically increasing with respect to $z$.
The proof thus is complete.
\hfill $\square$
\end{proof}

Asymptotic optimality of the index-based policy $\pi_{\rm d}$ stated in Theorem~\ref{theorem:indexpolicy} follows immediately
%Then by Theorem~2 in~\cite{weber1990index}, the indexability established in Lemma~\ref{lemma:indexability} together with the technical requirement thereof yields the asymptotic optimality of  the  index-based policy $\pi_{\rm d}$. Note that the technical requirement in~\cite{weber1990index} is inconsequential, as argued by the authors, since the cases where the indexability is not sufficient are ``extremely rare" and  the deviation from optimality (if exists) is ``minuscule".
from the indexability established in Lemma~\ref{lemma:indexability} and~\cite[Conjecture]{whittle1988restless}\footnote{\cite{weber1990index} presented a counterexample to the conjecture, and provided an additional technical requirement to assure the asymptotic optimality.
Notice that, however, this technical requirement is inconsequential, as argued by the authors, since the cases where the indexability is not sufficient are ``extremely rare" and  the deviation from optimality (if exists) is ``minuscule". On the other hand, it is quite difficult to verify this technical requirement (which says that a differential equation describing the fluid approximation of the index-based policy has no limit cycles or chaotic behavior), we thus omit the verification here.}.

\bibliographystyle{plainnat}
\bibliography{xq_reference}

\end{document}